\def\eqref#1{equation~\ref{#1}}
\def\1{\bm{1}}
\DeclareMathAlphabet{\mathsfit}{\encodingdefault}{\sfdefault}{m}{sl}
\SetMathAlphabet{\mathsfit}{bold}{\encodingdefault}{\sfdefault}{bx}{n}
\newcommand{\E}{\mathbb{E}}
\newcommand{\R}{\mathbb{R}}
\DeclareMathOperator*{\argmax}{arg\,max}
\definecolor{Red}{rgb}{1,0,0}
\definecolor{Blue}{rgb}{0,0,1}
\definecolor{Olive}{rgb}{0.41,0.55,0.13}
\definecolor{Yarok}{rgb}{0,0.5,0}
\definecolor{Green}{rgb}{0,1,0}
\definecolor{MGreen}{rgb}{0,0.8,0}
\definecolor{DGreen}{rgb}{0,0.55,0}
\definecolor{Yellow}{rgb}{1,1,0}
\definecolor{Cyan}{rgb}{0,1,1}
\definecolor{Magenta}{rgb}{1,0,1}
\definecolor{Orange}{rgb}{1,.5,0}
\definecolor{Violet}{rgb}{.5,0,.5}
\definecolor{Purple}{rgb}{.75,0,.25}
\definecolor{Brown}{rgb}{.75,.5,.25}
\definecolor{Grey}{rgb}{.5,.5,.5}
\newcommand{\bfPhi}{{\bf\Phi}}
\newcommand{\pr}{\mathbb{P}}
\newcommand{\C}{\mathbb{C}}
\newcommand{\G}{\mathbb{G}}
\newcommand{\rmCAT}{{\rm CAT}}
\newcommand{\ignore}[1]{\relax}
\newtheorem{theorem}{Theorem}[section]
\newtheorem{lemma}[theorem]{Lemma}
\newtheorem{prop}[theorem]{Proposition}
\newtheorem{coro}[theorem]{Corollary}
\newtheorem{Defi}[theorem]{Definition}
\newcommand{\ER}{Erd{\"o}s-R\'{e}nyi }
\definecolor{Red}{rgb}{1,0,0}
\definecolor{Blue}{rgb}{0,0,1}
\definecolor{Olive}{rgb}{0.41,0.55,0.13}
\definecolor{Green}{rgb}{0,1,0}
\definecolor{MGreen}{rgb}{0,0.8,0}
\definecolor{DGreen}{rgb}{0,0.55,0}
\definecolor{Yellow}{rgb}{1,1,0}
\definecolor{Cyan}{rgb}{0,1,1}
\definecolor{Magenta}{rgb}{1,0,1}
\definecolor{Orange}{rgb}{1,.5,0}
\definecolor{Violet}{rgb}{.5,0,.5}
\definecolor{Purple}{rgb}{.75,0,.25}
\definecolor{Brown}{rgb}{.75,.5,.25}
\definecolor{Grey}{rgb}{.5,.5,.5}
\definecolor{Pink}{rgb}{1,0,1}
\definecolor{DBrown}{rgb}{.5,.34,.16}
\definecolor{Black}{rgb}{0,0,0}
\theoremstyle{plain}
\newcommand{\change}[1]{{#1}}
\title{Combinatorial NLTS From the Overlap Gap Property}
\author{Eric R.\ Anschuetz}
\email{eans@caltech.edu}
\affiliation{MIT Center for Theoretical Physics, Cambridge, MA 02139, USA}
\author{David Gamarnik}
\email{gamarnik@mit.edu}
\affiliation{Sloan School of Management, MIT, Cambridge, MA 02139, USA}
\author{Bobak Kiani}
\email{bkiani@g.harvard.edu}
\affiliation{Department of Electrical Engineering and Computer Science, MIT, Cambridge, MA 02139, USA}
\begin{document}

\maketitle

\begin{abstract}
In an important recent development, Anshu, Breuckmann, and Nirkhe~\cite{anshu2022nlts} resolved positively the so-called No Low-Energy Trivial State (NLTS) conjecture by Freedman and Hastings~\cite{freedman2014quantum}. The conjecture postulated the existence of linear-size local Hamiltonians on $n$ qubit systems for which no near-ground state can be prepared by a shallow (sublogarithmic depth) circuit. The construction in \cite{anshu2022nlts} is based on recently developed good quantum codes. Earlier results in this direction included the constructions of the so-called Combinatorial NLTS---a weaker version of NLTS---where a state is defined to have low energy if it violates at most a vanishing fraction of the Hamiltonian terms~\cite{eldar2017local},\cite{anshu2022construction}. These constructions were also based on codes. 

In this paper we provide a ``non-code'' construction of a class of Hamiltonians satisfying the Combinatorial NLTS. The construction is inspired by one in~\cite{anshu2022construction}, but our proof uses the complex solution space geometry of random K-SAT instead of properties of codes. Specifically, it is known that above a certain clause-to-variables density the set of satisfying assignments of random K-SAT exhibits the overlap gap property, which implies that it can be partitioned into exponentially many clusters each constituting at most an exponentially small fraction of the total set of satisfying solutions~\cite{AchlioptasCojaOghlanRicciTersenghi}. We establish a certain robust version of this clustering property for the space of near-satisfying assignments and show that for our constructed Hamiltonians every combinatorial near-ground state induces a near-uniform distribution supported by this set. Standard arguments then are used to show that such distributions cannot be prepared by quantum circuits with depth $o(\log n)$. 

Unfortunately, our Hamiltonian is only local on average due to fluctuations in the numbers of clauses in which variables are involved. On the positive side, since the clustering property is exhibited by many random structures---including proper coloring, maximum cut, and the hardcore model---we anticipate that our approach is extendable to these models as well, including certain classical models of bounded degree. We sketch such an extension here for the $p$-spin Ising model on a sparse random regular graph as a non-code example where exactly local Combinatorial NLTS may be achieved. 
\end{abstract}

\section{Introduction}
A remarkable feature exhibited by some quantum systems, and not found in  classical models, is the potential complexity of the  state representation. 
Any classical state on an $n$-bit system is just an $n$-bit string and thus  requires at most $n$ bits to be represented. Quantum systems on $n$ qubits, 
on the other hand, are described in general by $2^n$ associated complex-valued 
amplitudes and thus in principle lead to a representation complexity which is exponential
in $n$. This should not necessarily be the case, however, for ``physical'' quantum states such as those arising as ground states of succinctly described
Hamiltonians. The so-called No Low-Energy Trivial States conjecture by Freedman and Hastings~\cite{freedman2014quantum} posits exactly this question:
does there exist a simple-to-describe Hamiltonian for which any associated near-ground state is too complex to describe? As a proxy for descriptional 
simplicity they propose to focus on Hamiltonians which are sums of linearly many local Hamiltonians.
Specifically, these are Hamiltonians $H$ acting on $n$ qubits, which can be written
as $H=\sum_i H_i$, where the number of terms $H_i$ is linear in $n$,  each $H_i$ has  bounded by $O(1)$ operator norm, and 
is local, in the sense that each $H_i$ acts on $O(1)$-many qubits.
As a proxy for the complexity measure of  state representation, we say that a state $\rho$ is  complex if it cannot be prepared by a sublogarithmic-in-$n$ depth
quantum circuit. Namely, $\rho$ is complex if there does not exist a logarithmic-depth circuit such that the  unitary $U$ associated with this circuit satisfies 
$\rho = U |0\rangle \langle 0 |U^\dagger$. The NLTS conjecture then posits the existence of linear-size local Hamiltonians $H$ such that every state $\rho$ which is a near-ground state of $H$ (that is, a near minimizer of $\min {\rm Tr}(H\rho)$) is complex in the sense above.

The existence of such Hamiltonians is a necessary consequence of the validity of the quantum PCP conjecture: if a near-ground state were of low complexity, a classical representation of it could be obtained in polynomial time classically. This would then imply that the complexity class $\textsc{QMA}$ satisfies $\textsc{QMA}=\textsc{NP}$, 
which is widely
believed not to be the case. Thus, if the quantum PCP conjecture is valid, NLTS must hold. 
Validating the NLTS conjecture was then proposed as a reasonable step towards the ultimate goal of proving the 
quantum PCP theorem~\cite{aharonov2013guest}.

A series of results~\cite{eldar2017local,eldar2019robust,bravyi2020obstacles,anshu2020circuit} gave strong evidence in favor of the conjecture, with the strongest evidence (before its eventual resolution) given by Anshu and Breuckmann~\cite{anshu2022construction}. There, the validity of the so-called \emph{combinatorial} version of NLTS---the subject of this paper---was established. Given a linear-size local Hamiltonian $H=\sum_i H_i$ and given $\epsilon>0$,  a state $\rho$ is an $\epsilon$-near ground state of $H$ in the combinatorial sense if it is a ground state for at least a $(1-\epsilon)$ fraction of Hamiltonian terms $H_i$. Namely, using constraint satisfaction terminology, $\rho$ has to satisfy
all but an $\epsilon$-fraction of constraints $H_i$. Validity of the NLTS conjecture trivially implies its combinatorial version but not vice-versa.

A complete positive resolution of the original NLTS conjecture was finally obtained in a remarkable paper by  
Anshu, Breuckmann, and  Nirkhe~\cite{anshu2022nlts}. Their construction as well
as all of the previous constructions (to the best of our knowledge) are based on codes. Specifically, classical or quantum codes are used to design the Hamiltonians
$H_i$. Near-code words are then shown to exhibit sufficiently strong entanglement which obstructs their low complexity representation. The construction
in \cite{anshu2022nlts} leverages a recently constructed quantum Tanner code~\cite{leverrier2022quantum} to construct a Hamiltonian which exhibits the following
clustering property: near-ground states induce a probability measure on a basis which is supported on a disjoint union of exponentially many clusters 
separated by a linear distance. Most of the time this was shown directly for the computational basis, as in~\cite{anshu2022construction} and~\cite{eldar2017local}.
On the other hand, the construction in~\cite{anshu2022nlts} is shown to exhibit this clustering in the computational and/or Hadamard bases.

Remarkably, such a clustering property is ubiquitous in a completely  
different research direction devoted to studying random constraint satisfaction problems and closely related models known as 
diluted spin glasses. Examples of the former include random instances of the K-SAT problem~\cite{AchlioptasCojaOghlanRicciTersenghi}, 
the problem of finding a proper coloring of a graph~\cite{achlioptas2008algorithmic},\cite{mezard2005clustering}, the largest
independent set problem~\cite{coja2011independent}, and the maximum cut
problem on sparse random hypergraphs~\cite{chen2019suboptimality}. An example of the latter includes an Ising or Edwards--Anderson model supported on a sparse random graph~\cite{FranzLeone}.
Many of these models are known to exhibit an \emph{overlap gap property} (OGP), which roughly speaking states that any two (classical) near-ground states
are either close or far from each other in Hamming distance. The overlap gap property was then used to prove the clustering 
property: the set of near ground states is supported on a disjoint union of exponentially many clusters each consisting of at most an exponentially
small fraction of the total set of states. Starting from \cite{gamarnik2014limits}
this property was then widely used to establish various algorithmic complexity lower bounds, 
in particular
ruling out various classical algorithms for finding near-ground states. The links between the overlap gap property and algorithmic complexity 
are surveyed in \cite{gamarnik2021overlap} and \cite{gamarnik2022disordered}. The abundance of clustering properties exhibited in these models
raises the natural question of whether 
it is possible to use these models for constructing NLTS. Our paper provides a positive answer to this question for the case of Combinatorial NLTS. 
Specifically, we use the appropriately extended clustering property of the random K-SAT model toward this goal.
Naturally, the diagonal Hamiltonians arising directly from the K-SAT model cannot be used for NLTS directly 
since low-energy states of such Hamiltonains trivially include those with small description complexity. Instead,
we resort to a method similar to one used in \cite{anshu2022construction}. Specifically, using random instances of the K-SAT model we construct a quantum Hamiltonian
such that  each combinatorial near-ground state of it induces a probability measure on the computational basis which is
a  near-uniform distribution on the space of near-satisfying assignments of the K-SAT formula. This is obtained by projecting a tensor product
of cat-like states onto the set of nearly satisfying assignments of the formula.
We then use a standard argument to show that such measures cannot be prepared by shallow quantum circuits. Toward this goal we utilize a certain
appropriate mixture of computational and Hadamard bases on the one hand, and prove a certain ``robust'' version of the overlap gap and clustering property
on the other hand. These constitute the main technical contributions of our paper. 

Unfortunately, the Hamiltonians we construct are local only on average, in the sense that typical Hamiltonian terms act on at most $O(1)$ many qubits, 
while some may act on as many as $O(\log n)$ qubits.
This a direct implication of the fact that while the average degree of a sparse random graph has constant degree, the largest degree grows like $O(\log n)$
w.h.p.\
We conjecture, however, that a similar  version of the OGP also holds for certain classical models of bounded degree, with a similar quantum extension to Combinatorial NLTS as we exhibit here. This appears to require substantially more involved technical work which we do not pursue, though we give an outline of a proof in Section~\ref{sec:bounded_degree}.

On a positive note, our work shows that code
structures such as those that underlie all of the prior NLTS constructions are not essential for the goal of building NLTS Hamiltonians, 
and in fact quantum Hamiltonians exhibiting Combinatorial NLTS can be constructed 
from classical random Hamiltonians supported by random graphs
and hypergraphs. 
Indeed we conjecture that constructions of Combinatorial NLTS similar to ours can be obtained from other models exhibiting the overlap gap property,
including the largest independent set problem as in~\cite{gamarnik2014limits} and the maximum cut problem as in~\cite{chen2019suboptimality}. 
We further conjecture that the class of Hamiltonians we construct based on random K-SAT instances is in fact NLTS (and not just Combinatorial NLTS), but we are not able
to prove this. Finally, the potential ramification of our result for validating the quantum PCP conjecture is another interesting question for further research.

The remainder of the paper is structured as follows. Random K-SAT instances and the statement of the Combinatorial NLTS are introduced
in the next section. Our main result is stated in the next section as well. In  Section~\ref{section:OGP+clustering} we  define the overlap gap property and the implied 
clustering properties. In the same section we  state and prove  a robust version of the overlap gap and clustering property which applies to 
``nearly'' satisfying assignments of ``nearly complete'' subsets of Boolean variables. This is an  extension of an older result by Achlioptas, Coja-Oghlan 
and Ricci-Tersenghi~\cite{AchlioptasCojaOghlanRicciTersenghi} which regards assignments with no clause violations. 
There both the overlap gap  and the clustering properties are established for $K\ge 8$. The robustness required for our purposes forces us
to resort instead to large $K$ and the details of the bounds are somewhat delicate.
We complete the proof of our main result in Section~\ref{section:main-result}. Finally, we end in Section~\ref{sec:bounded_degree} with a sketch of an extension of our results to models with bounded degree.

\section{Model, construction, and the main result}

\subsection{Random K-SAT formulas}
We begin by describing a random instance of the Boolean constraint satisfaction problem, specifically the K-SAT problem. 
\change{A Hamiltonian is constructed from this classical constraint satisfaction problem which is different but nonetheless related to instances of problems studied in so-called quantum SAT \cite{bravyi2006efficient}.} 
An integer $K$ is fixed.
A collection of Boolean variables $x_1,\ldots,x_n$ is fixed, each taking values in $\{0,1\}$. A $K$-clause $C$ over these variables is a disjunction
of the form $y_1\vee y_2\vee\cdots \vee y_K$, consisting of exactly $K$ elements where each $y_j, j\in [k]$ is either one of $x_1,\ldots,x_n$
or one of the negations $\neg x_1,\ldots,\neg x_n$. For example, a $K=4$-clause with $y_1=x_3,y_2=\neg x_{17}, y_3=\neg x_6, y_4=\neg x_2$,
gives rise to the clause $\left(x_3\vee \neg x_{17} \vee \neg x_6 \vee \neg x_2\right)$.

A formula $\Phi$ is a conjunction $\Phi\triangleq C_1\wedge \cdots  \wedge C_m$ of any collection of $K$-clauses $C_1,\ldots,C_m$.
A solution is any assignment $x=(x_1,\ldots,x_n)\in \{0,1\}^n$. A solution $x$ is said to satisfy a clause $C$ if the Boolean evaluation of this clause is $1$, namely, if at least one of the variables $x_i$ appearing in this clause satisfies it. For the example $4$-clause above, any solution with $x_3=1$ satisfies this clause, and so does any solution with $x_{17}=0,x_6=0,x_2=0$. Notice that for each clause $C$ there exists a unique assignment of the $K$ variables in this clause which does not satisfy it which we denote by $v(C)\in \{0,1\}^K$. A solution $x$ satisfies the formula $\Phi$ if it satisfies every clause in the formula.

For every clause $C_j$ we denote by $(j,1),\ldots,(j,K)$ the indices of variables appearing in $C_j$ for convenience. For example, if, say,
$C_5=\left(\neg x_4\vee \neg x_{1} \vee \neg x_9 \vee  x_{12}\right)$ and $C_8=\left(x_{32}\vee  x_{17} \vee \neg x_{16} \vee \neg x_{12}\right)$, then
$(5,1)=4,(5,2)=1,(5,3)=9,(5,4)=12$ and $(8,1)=32,(8,2)=17,(8,3)=16,(8,4)=12$.

We now turn to the description of a (uniform) random instance of a K-SAT problem. For a fixed $K,n,m$ we consider the space of all $(2n)^K$ clauses 
(with repetition of variables within a clause allowed and the literals within a clause ordered). The random formula is generated by selecting $C_1,\ldots,C_m$
independently and uniformly at random from this space of all clauses. An instance of a random formula is denoted by $\bfPhi(n,m)$, where the dependence on $K$
is kept implicit. We consider a proportional setting where a ratio $m/n$ is fixed to be a constant as $n$ increases. 
A relevant regime for us will be when $m/n$ is of the form $\alpha 2^K \log 2$ for some $\alpha\in (1/2,1)$. In this case the set of satisfying assignments
exhibits a clustering property crucial for our construction. The derivation of this clustering property is the subject of Section~\ref{section:OGP+clustering}.

Next we turn to a construction of a sparse Hamiltonian based on an instance of a K-SAT model.

\subsection{Construction of a quantum Hamiltonian and the main result}
\begin{table}[]
\centering
\change{
\begin{tabular}{ll}
\hline
Symbol & Description \\ \hline
     $x_i$  &  Boolean variable           \\
     $n$  &  Number of SAT variables           \\
    $m$   &   Number of clauses          \\
   $\Phi(n,m)$    &      Random SAT formula with $n$ variables and $m$ clauses       \\
  $C_i$     &  Clause $i$ in random formula $\Phi(n,m)$           \\
  $C(x_i)$ & Set of clauses containing variable $x_i$ \\
   $K$    &   Locality of clauses          \\
    $N$   &   Number of qubits ($N=Km$)      \\
   $v(C_j)$    &   Unique non-satisfying
solution of clause $C_j$         \\ 
$D(i)$ & Subset of qubits corresponding to the variable $x_i$ \\
\hline
\end{tabular}
\caption{Description of notation used in construction of the quantum Hamiltonian (\Cref{eq:Hamiltonian}).}
}
\end{table}

We now turn to the description of a quantum Hamiltonian which will give a rise to a combinatorial NLTS construction. This construction is inspired by the one in \cite{anshu2022construction}.
We fix a realization of  a random formula $\bfPhi(n,m)$ on variables $x_1,\ldots,x_n$ and clauses
$C_1,\ldots,C_m$ which we also denote by $\bfPhi(n,m)$ for convenience. For each $j=1,\ldots,m $ the associated variables are denoted by 
$x_{j,k}, k\in [K]$. That is, each $x_{j,k}$ is one of $x_1,\ldots,x_n$.
We consider a quantum system on $m K$ qubits. These qubits are denoted by 
 $(1,1),\ldots,(1,K), (2,1), \ldots,(2,K), \ldots,(m ,1), \ldots, (m ,K)$, where we recall that $(j,k)$ denotes the index of the $k$-th variable
 appearing in clause $C_j$. For each  $i\in [n]$ we denote by $D(i)$ the subset of qubits corresponding to the variable $x_i$.
 Namely, it is the set of all qubits with indices $(j,k)$ such that $(j,k)=i$. As an example, if the variable $x_3$ appears in the clause
$C_4$ in position $2$, $C_7$ in position $5$, and $C_{12}$ in position $1$---and does not appear in any other clause---then
$D(x_3)=\{(4,2),(7,5),(12,1)\}$. Note that the sets $D(i)$ are disjoint and that their union is the set of all qubits $\{(j,k), j\in [m] , k\in [K]\}$. For each $i \in [n]$, we also denote the set of clauses containing variable $x_i$ by $C(x_i) = \{ C_j: i \in \{ (j,1),\ldots,(j,K) \} \}$. For example, if there are three clauses: $C_1 = \left(\neg x_4\vee \neg x_{1} \vee \neg x_9 \vee  x_{7}\right)$, $C_2 = \left(\neg x_5\vee  x_{2} \vee \neg x_1 \vee  x_{3}\right)$, and $C_3 = \left(\neg x_3\vee \neg x_{8} \vee  x_4 \vee  x_{6}\right)$, then $C(x_1) = \{ C_1, C_2 \}$.

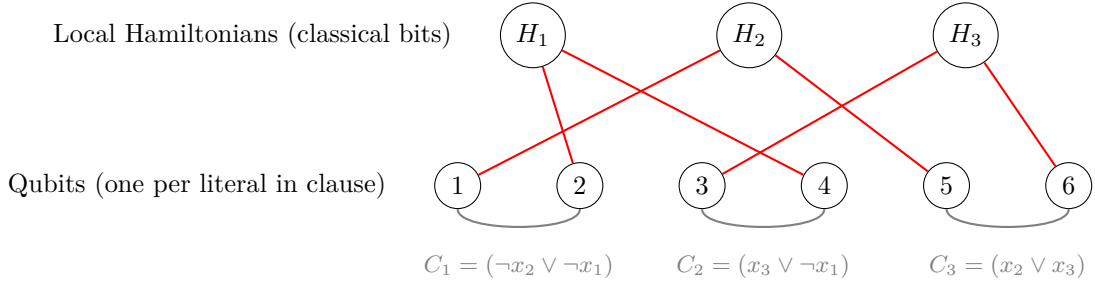
\begin{figure}
    \centering
    \begin{tikzpicture}[
    every node/.style={draw, circle},
    red edge/.style={draw, red, thick},
    gray edge/.style={draw, gray, thick},
    edge label/.style={font=\small, midway, auto, outer sep=-25pt, draw=none},
]

\node (h1) at (1, 2) {$H_1$};
\node[right=2cm of h1] (h2) {$H_2$};
\node[right=2cm of h2] (h3) {$H_3$};

\node (q1) at (0, 0) {1};
\node[right=1cm of q1] (q2) {2};
\node[right=1cm of q2] (q3) {3};
\node[right=1cm of q3] (q4) {4};
\node[right=1cm of q4] (q5) {5};
\node[right=1cm of q5] (q6) {6};

\node[left=0.5cm of h1, draw=none] {Local Hamiltonians (classical bits)};
\node[left=0.5cm of q1, draw=none] {Qubits (one per literal in clause)};

\draw[red edge] (h1) -- (q2);
\draw[red edge] (h1) -- (q4);
\draw[red edge] (h2) -- (q1);
\draw[red edge] (h2) -- (q5);
\draw[red edge] (h3) -- (q3);
\draw[red edge] (h3) -- (q6);

\draw[gray edge, out=270, in=270, looseness=0.5] (q1) to node[edge label, below] {$C_1 = \left(\neg x_2\vee \neg x_{1}\right)$} (q2);
\draw[gray edge, out=270, in=270, looseness=0.5] (q3) to node[edge label, yshift=-3pt, below] {$C_2 = \left(x_3\vee \neg x_{1}\right)$} (q4);
\draw[gray edge, out=270, in=270, looseness=0.5] (q5) to node[edge label, yshift=-6pt, below] {$C_3 = \left( x_2\vee x_{3}\right)$} (q6);

\end{tikzpicture}
    \vspace{-1.3cm}
    \caption{A visualization of an example Hamiltonian $H = H_1 + H_2 + H_3$ from (\ref{eq:Hamiltonian}) constructed from a classical 2-SAT formula of 3 clauses where $C_1 = \left(\neg x_2\vee \neg x_{1}\right)$, $C_2 = \left(x_3\vee \neg x_{1}\right)$, and $C_3 = \left( x_2\vee x_{3}\right)$. Here, each term of the Hamiltonian $H_i$ as described in (\ref{eq:Hamiltonian}) is $4$-local as the $\ket{\rmCAT(i)}$ states are each supported on two qubits on disjoint clauses which correspond to the matrices $Q_{C_j,\gamma}^{-1}$ that are $2$-local.}
    \label{fig:hamiltonian_viz}
\end{figure}

For each $i\in [n]$ we denote by $\ket{\rmCAT(i)}$ the CAT state associated with qubits in $D(i)$. That is, 
$\ket{\rmCAT(i)}=\frac{1}{\sqrt{2}} |00\ldots 0\rangle +\frac{1}{\sqrt{2}} |11\ldots 1\rangle $ with qubits supported on $D(i)$.
We denote by $\ket{\rmCAT}$ the tensor product of these states: $\ket{\rmCAT}\triangleq   \bigotimes_{i \in [n]}\ket{\rmCAT(i)}$.

Next, for each $j\in [m ]$ we consider the projection onto the space spanned by vectors orthogonal to $|v(C_j)\rangle$, 
where, as we recall, $v(C_j)$ is the unique non-satisfying
solution of the clause $C_j$. That is, we consider the operator:
\begin{align*}
Q_{C_j}\change{\triangleq}\sum_{v\in \{0,1\}^K, v\ne v(C_j)} \ket{v} \bra{v}
\end{align*}
applied to the subset $\{(j,1),\ldots,(j,K)\}$ of $\change{N=Km}$ qubits corresponding to the clause $C_j$, tensored with the identity operator on the remaining qubits. 
\change{For} any element $|z\rangle$ of the computational basis we have that $Q_{C_j} |z\rangle =|z\rangle$ if $z\in \{0,1\}^{\change{N}}$ encodes a satisfying solution of $C_j$ and $Q_{C_j} |z\rangle =0$ otherwise. \change{Thus, indices of $\ket{z}$ can be interpreted as assigning values to the classical variables $x_1, \dots, x_n$. Note that most $\ket{z}$ will not correspond to a globally valid assignment to $x_1, \dots, x_n$, as any valid assignment must be consistent with the projectors onto the CAT states; states corresponding to such invalid assignments will have eigenvalue $0$ due to these projectors.}

The operator $Q_{C_j}$ is not invertible due to the existence of non-satisfying solutions $z$. As in~\cite{anshu2022construction} we consider an approximating
but invertible operator by fixing a constant $\gamma>0$ and defining:
\begin{align*}
Q_{C_j,\gamma}\triangleq (1-\gamma)Q_{C_j}+\gamma I,
\end{align*}
acting on the same set of $K$ qubits again tensored with the identity operator on the remaining qubits. 
We also define an invertible operator
\begin{align*}
\change{
Q(\gamma)\triangleq \prod_{1\le j\le m}Q_{C_j,\gamma}.
}
\end{align*}
Next, for each variable $x_i, i\in [n]$
we consider the tensor product:
\begin{align*}
\change{
Q_{x_i,\gamma}\triangleq \prod_{j: C_j\in C(x_i)} Q_{C_j,\gamma},
}
\end{align*}
again tensored with the identity on the remaining qubits. Namely, we take a tensor product of operators $Q_{C_j,\gamma}$ associated with clauses $C_j$
containing variable $x_i$. 
This is an invertible operator on the set 
of all $\change{N}$ qubits; $Q_{x_i,\gamma}^{-1}$ denotes its inverse.

Finally, we consider the Hamiltonian:
\begin{align}
H_i & = Q_{x_i,\gamma}^{-1}\left (I-\ket{\rmCAT(i)} \bra{\rmCAT(i)}\right) Q_{x_i,\gamma}^{-1}, \qquad i\in [n]  \notag \\
H & = \sum_{i\in [n]} H_i. \label{eq:Hamiltonian}
\end{align}
\change{Intuitively, each $H_i$ penalizes the energy of a state $\ket{\phi}$ if it violates clauses in $C(x_i)$ for the $i$-th variable or has support over computational bases $\ket{z}$ which are not consistent with the CAT projectors.}
We note that $\ket{\rmCAT(i)} \bra{\rmCAT(i)}$ is the projection operator onto the state $\ket{\rmCAT(i)}$.
The operator $H$ is ``local on average'' when $m=O(n)$  as each summand $H_i$ acts on \change{approximately $K^2m/n$} qubits in expectation. Furthermore, the largest locality---namely, the largest number of qubits per $H_i$---is at most $O(\log n)$ in this case, as it corresponds to a maximum of $O(n)$ binomial random variables with $O(n)$ trials and success probabilities $O(n^{-1})$.

This completes the construction of our local (on average) Hamiltonian $H$. We claim that  a suitable choice of $\alpha$, $K$, and $\gamma$
verifies the requirements of the combinatorial NLTS. Specifically, we fix $\epsilon$ and define a combinatorial notion of a state being near the ground state. 

\begin{Defi}\label{definition:Combinatorial-near-ground-state}
A pure state $|\psi\rangle$ on $\change{N=Km}$ qubits is an $\epsilon$-near ground state (in the combinatorial sense) if there exists \change{any} subset $S\subset [n]$\change{---termed an \emph{associated subset}---}such that $|S|\ge (1-\epsilon)n$ and 
such that $|\psi\rangle$ is a ground state for all $H_i$ with $i\in S$. Namely, $H_i | \psi\rangle=0$ for all $i\in S$.
\end{Defi}

\change{We note that a near ground state $\ket{\psi}$ can have multiple associated subsets $S \subset [n]$ that meet the criteria in the above definition.}

Next, we fix a positive integer $T$ and consider
depth-$T$ circuits. Such circuits are defined inductively as follows. A depth-$1$ circuit is a unitary $U_1$ obtained as a tensor product of unitary
operators each acting on a disjoint union of $2$-qubit subsets. For any $t\le T-1$, suppose $U_t$ is any depth-$(t-1)$ circuit. Then, a depth-$t$ circuit is defined
as any  circuit of the form $V U_{t-1}$, where $U_{t-1}$ is any depth-$(t-1)$ circuit and $V$ is any depth-$1$ circuit. 

\begin{Defi}\label{definition:T-trivial}
Given $T>0$, a state $|\psi\rangle$ is $T$-trivial if there exists a depth-$T$ circuit $U_T$ such that $|\psi\rangle = U_T |0\rangle^{\otimes \change{N}}$. 
\end{Defi}

We are now ready to state our main result.
\begin{theorem}\label{theorem:MainResult}
There exist $\alpha,K,\gamma,\epsilon,\tau>0$ such that the following holds w.h.p.\ as $n\to\infty$: the Hamiltonian $H$ is frustration free (i.e., it has ground state energy $\lambda_{\min}(H)=0$), and for every sufficiently large $n$, if $|\psi\rangle$
is an $\epsilon$-near ground state with respect to the Hamiltonian (\ref{eq:Hamiltonian}), then for $T\le \tau \log n$ it holds that $|\psi\rangle$ is not $T$-trivial.
\end{theorem}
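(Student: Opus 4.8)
The plan is to argue in three steps: (a) $H$ is frustration free; (b) every combinatorial $\epsilon$-near ground state induces a computational-basis measurement law $\mu$ that is, up to exponentially small error, an almost uniform measure on the near-satisfying assignments of almost all the variables; (c) the robust clustering of Section~\ref{section:OGP+clustering} turns such a $\mu$ into a ``clustered'' distribution, which the standard light-cone argument rules out for circuits of depth $o(\log n)$. For (a): each $H_i\succeq 0$, since with $P_i:=\ket{\rmCAT(i)}\bra{\rmCAT(i)}$ and $Q_{x_i,\gamma}^{-1}$ Hermitian, $\bra{v}H_i\ket{v}=\bra{Q_{x_i,\gamma}^{-1}v}(I-P_i)\ket{Q_{x_i,\gamma}^{-1}v}\ge 0$, so $\lambda_{\min}(H)\ge 0$; and for a matching zero eigenvector take $\ket{\psi_0}\propto Q(\gamma)\ket{\rmCAT}$, nonzero as $Q(\gamma)$ is invertible. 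Because the qubits of $D(i)$ occur only in clauses of $C(x_i)$, the operator $Q_{x_i,\gamma}^{-1}Q(\gamma)=\bigotimes_{j:\,C_j\notin C(x_i)}Q_{C_j,\gamma}$ acts as the identity on $D(i)$, so $Q_{x_i,\gamma}^{-1}\ket{\psi_0}=\ket{\rmCAT(i)}_{D(i)}\otimes(\cdots)\in\mathrm{range}(P_i)$, whence $(I-P_i)Q_{x_i,\gamma}^{-1}\ket{\psi_0}=0$ and $H_i\ket{\psi_0}=0$ for all $i$; thus $H\ket{\psi_0}=0$ and $\lambda_{\min}(H)=0$.

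\emph{Rigid structure and the induced measure.} Fix an $\epsilon$-near ground state $\ket\psi$ and the set $S$ from Definition~\ref{definition:Combinatorial-near-ground-state}, and let $\ket{\hat\psi}:=Q(\gamma)^{-1}\ket\psi$. For $i\in S$, $H_i\ket\psi=0$ forces $Q_{x_i,\gamma}^{-1}\ket\psi\in\mathrm{range}(P_i)$; as $Q_{x_i,\gamma}^{-1}\ket\psi=\big(\bigotimes_{j:\,C_j\notin C(x_i)}Q_{C_j,\gamma}\big)\ket{\hat\psi}$ with the latter operator invertible and equal to the identity on $D(i)$, the reduced state of $\ket{\hat\psi}$ on $D(i)$ is the pure state $\ket{\rmCAT(i)}\bra{\rmCAT(i)}$, i.e.\ $D(i)$ is unentangled from everything else. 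Iterating over the disjoint $\{D(i)\}_{i\in S}$,
\begin{align*}
\ket{\hat\psi}=\Big(\bigotimes_{i\in S}\ket{\rmCAT(i)}\Big)\otimes\ket\phi ,
\end{align*}
where $\ket\phi$ lives on $Q_{\mathrm{out}}:=\bigcup_{i\notin S}D(i)$, a set of at most $\delta n$ qubits w.h.p.\ for some $\delta=\delta(\epsilon)\to 0$ (the $\epsilon n$ largest variable degrees sum to $O(\epsilon n)$ w.h.p.). Since $Q(\gamma)\ket z=\gamma^{r(z)}\ket z$ on computational-basis states, with $r(z)=\#\{j:\,z|_{(j,1),\dots,(j,K)}=v(C_j)\}$, the law $\mu$ of $\ket\psi$ is supported on $S$-consistent $z$, and writing such a $z$ as $(\sigma,w)$ with $\sigma=\sigma(z)\in\{0,1\}^S$, $w=z|_{Q_{\mathrm{out}}}$, and $r(z)=r_{\mathrm{good}}(\sigma)+r_{\mathrm{bad}}(z)$ where $r_{\mathrm{bad}}(z)$ (violated clauses meeting $[n]\setminus S$) is at most $m_{\mathrm{bad}}\le\delta n$,
\begin{align*}
\mu(\sigma,w)=2^{-|S|}\,\lvert\langle w|\phi\rangle\rvert^2\,\gamma^{2r_{\mathrm{good}}(\sigma)}\,\gamma^{2r_{\mathrm{bad}}(\sigma,w)} .
\end{align*}
A short computation from $\sum_z\mu(z)=1$ (keeping only $\sigma\in\mathrm{SAT}_{\mathrm{good}}$, the assignments of $S$ satisfying all clauses inside $S$, for which $r(z)=r_{\mathrm{bad}}(z)\le\delta n$) together with the crude bounds $\gamma^{2r_{\mathrm{good}}},\gamma^{2r_{\mathrm{bad}}}\le 1$ and $\gamma^{2r_{\mathrm{good}}(\sigma)}\le\gamma^{2\beta n}$ on the relevant index set yields $\Pr_{z\sim\mu}[\,r_{\mathrm{good}}(\sigma(z))>\beta n\,]\le\gamma^{2(\beta-\delta)n}2^{|S|}/\lvert\mathrm{SAT}_{\mathrm{good}}\rvert$ and $\mu(\{z:\sigma(z)\in A\})\le|A|/(\gamma^{2\delta n}\lvert\mathrm{SAT}_{\mathrm{good}}\rvert)$ for any $A\subseteq\{0,1\}^S$. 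In the regime $m/n=\alpha 2^K\log 2$, $\alpha\in(1/2,1)$, we have $\lvert\mathrm{SAT}_{\mathrm{good}}\rvert\ge 2^{c'n}$ w.h.p.; so once $\gamma$ is chosen small enough that $\beta-\delta>\tfrac{(1-c')\log 2}{2\log(1/\gamma)}$ the first bound is $2^{-\Omega(n)}$, and the second says $\mu$ cannot concentrate on a small set of assignments.

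\emph{Clustering and the circuit lower bound.} By the robust overlap-gap/clustering property of Section~\ref{section:OGP+clustering}, for a suitable large $K$ and $\alpha\in(1/2,1)$ below condensation, w.h.p.\ (simultaneously for every $S$ with $|S|\ge(1-\epsilon)n$) the set of $\sigma\in\{0,1\}^S$ violating at most $\beta n$ clauses inside $S$ partitions into clusters $\mathcal C_1,\dots,\mathcal C_L$ with $\lvert\mathcal C_\ell\rvert\le 2^{-\Omega(n)}\lvert\mathrm{SAT}_{\mathrm{good}}\rvert$ (no cluster carries a constant fraction) and pairwise Hamming distance $\ge\zeta n$. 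Pulling each $\mathcal C_\ell$ back to the $S$-consistent strings $(\sigma,w)$ with $\sigma\in\mathcal C_\ell$ preserves the separation, since flipping a logical value flips a whole $D(i)$ and the $w$-qubits are disjoint from the $\sigma$-carrying ones; call the pullbacks $\widetilde{\mathcal C}_\ell\subseteq\{0,1\}^{Km}$. By the previous step $\mu\big(\bigcup_\ell\widetilde{\mathcal C}_\ell\big)\ge 1-2^{-\Omega(n)}$, while $\mu(\widetilde{\mathcal C}_\ell)\le\lvert\mathcal C_\ell\rvert/(\gamma^{2\delta n}\lvert\mathrm{SAT}_{\mathrm{good}}\rvert)\le 2^{-\Omega(n)}/\gamma^{2\delta n}$, which is $2^{-\Omega(n)}$ once $\epsilon$ (hence $\delta$) is small relative to $1/\log(1/\gamma)$. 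Finally invoke the standard light-cone fact (cf.\ \cite{eldar2017local,anshu2022construction}): a depth-$T$ circuit whose computational-basis output distribution puts $1-o(1)$ mass on a union of sets pairwise at Hamming distance larger than $2^{O(T)}$ must put all but $2^{-\Omega(n)}$ of its mass on a single one of them; to neutralize the defect register one measures the $Q_{\mathrm{out}}$ qubits in the Hadamard basis, as indicated in the introduction, leaving the bulk argument intact. For $T\le\tau\log n$ with $\tau>0$ a small constant one has $\zeta n>2^{O(T)}$, so the two mass estimates are incompatible and $\ket\psi$ is not $T$-trivial. One fixes the parameters in the order $K,\alpha$ (hence $c',\zeta$ and the cluster-size exponent), then $\gamma$ small, then $\beta$ small, then $\epsilon$ small, then $\tau$ small.

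\textbf{Main obstacle.} The crux is the clustering input of Section~\ref{section:OGP+clustering}: one must establish not merely shattering into exponentially many clusters, but \emph{absence of condensation} (no cluster holding a constant fraction of the solutions---needed so that ``$\mu$ near uniform on $\mathrm{SAT}_{\mathrm{good}}$'' truly forces $\mu$ to spread), and moreover stability of this picture under passing to a near-complete variable subset $S$ and under the $O(\epsilon n)$ extra constraints that the adversarial junk state $\ket\phi$ can impose through the bad clauses. This robust, uncondensed clustering statement is what forces large $K$ and the careful density accounting; everything else above is bookkeeping around it.
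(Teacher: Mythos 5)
Your argument follows the paper's proof essentially step for step: prove frustration freeness via $\ket{\psi_0}\propto Q(\gamma)\ket{\rmCAT}$; show that a combinatorial near-ground state forces the $D(i)$, $i\in S$, into $\rmCAT$ states, so the induced computational-basis law is close to uniform over nearly-satisfying partial assignments; then combine the robust clustering of Section~\ref{section:OGP+clustering} with the light-cone lower bound (Lemma~\ref{lem:circuit_lower_bound}). The one genuine presentational difference is how you obtain the structural lemma: the paper expands $\ket\phi=Q^{-1}(\gamma)\ket\psi$ in the noncanonical cat/anti-cat product basis~(\ref{eq:basis}) and shows via Lemma~\ref{lemma:basis-CAT} that all amplitudes on basis vectors with a non-$\rmCAT(i)$ tensor factor vanish, whereas you observe that $H_i\ket\psi=0$ and invertibility of $\bigotimes_{j:C_j\notin C(x_i)}Q_{C_j,\gamma}$ (acting as identity on $D(i)$) force the reduced state of $\ket{\hat\psi}$ on $D(i)$ to be pure $\ket{\rmCAT(i)}\bra{\rmCAT(i)}$, and therefore $\ket{\hat\psi}=\big(\bigotimes_{i\in S}\ket{\rmCAT(i)}\big)\otimes\ket\phi$. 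This is equivalent to Lemma~\ref{lemma:basis-CAT} and arguably cleaner; your subsequent estimates ($\mu(\{z:\sigma(z)\in A\})\le |A|/(\gamma^{2\delta n}|\mathrm{SAT}_{\mathrm{good}}|)$ and the tail bound on $r_{\mathrm{good}}$) are the analogue of the paper's Theorem~\ref{theorem:probability-bounds}, obtained more directly from the explicit product form rather than via Lemma~\ref{lemma:bounds-in-l(z)} and the maximizer $z^*$. Two small points to tidy up: (i) your remark about ``measuring the $Q_{\mathrm{out}}$ qubits in the Hadamard basis'' is unnecessary and a bit misleading---the paper applies Lemma~\ref{lem:circuit_lower_bound} to a straight computational-basis measurement, and the $Q_{\mathrm{out}}$ part of $z$ is simply carried along, since cluster separation already lives on the $\bigcup_{i\in S}D(i)$ coordinates; (ii) your justification that $|Q_{\mathrm{out}}|\le\delta n$ via ``the $\epsilon n$ largest variable degrees sum to $O(\epsilon n)$'' is slightly imprecise (the correct order is $o(n)$ as $\epsilon\to 0$, not $O(\epsilon n)$); the paper instead controls $m-|C(S)|\le\eta n$ uniformly over $S$ via Azuma--Hoeffding in Proposition~\ref{prop:eta}, which is what you should cite here.
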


\section{Robust overlap gap property and clustering in random K-SAT instances}\label{section:OGP+clustering}

In this section we derive certain structural properties of the solution space geometry of near-satisfying solutions of the random K-SAT problem. 
In particular, we will show that it exhibits the so-called overlap gap property (OGP); namely, the clustering of near-satisfying assignments
when the parameters $\alpha$ and $K$ are set appropriately. This clustering property will be pivotal in the proof of our main result, Theorem~\ref{theorem:MainResult}.

\begin{Defi}
Given $0<\nu_1<\nu_2<1/2$ and a subset $A\subset \{0,1\}^n$ we say that 
$A$ exhibits an $(\nu_1,\nu_2)$-overlap-gap-property---abbreviated as a $(\nu_1,\nu_2)$-OGP---if for every $x^1,x^2\in A$ either $n^{-1}d(x^1,x^2)\le \nu_1$ or $n^{-1}d(x^1,x^2)\ge \nu_2$, where $d(\cdot,\cdot)$ denotes
the Hamming distance in $\{0,1\}^n$. 
\end{Defi}

\begin{figure}
    \centering
    \includegraphics{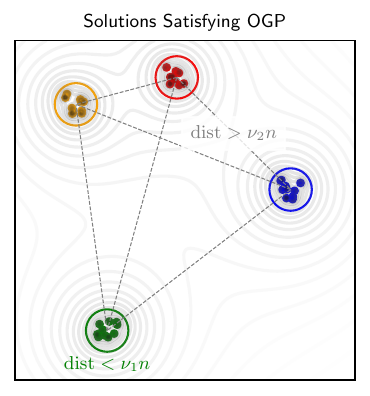}
    \caption{An illustrative visualization of the $(\nu_1,\nu_2)$-OGP shown here as a contour plot for an idealized optimization landscape. Solutions are clustered in balls of radius at most $\nu_1n$ and each pair of disjoint clusters is at least distance $\nu_2 n$ apart.}
    \label{fig:ogp_viz}
\end{figure}

When a set $A$ exhibits an $(\nu_1,\nu_2)$-OGP and additionally $\nu_1<(1/2)\nu_2$,
there exists a unique partitioning of $A$ into a disjoint union of $L$ clusters \change{
$A=\bigcup_{1\le \ell\le L} CL_\ell$}, $CL_{\ell_1}\cap CL_{\ell_2}=\emptyset, \ell_1\ne \ell_2$ 
such that for every $\ell$ and every $x^1,x^2\in CL_\ell$, it holds that
$n^{-1}d(x^1,x^2)\le \nu_1$ and for every $\ell_1\ne \ell_2$ and every $x^j\in CL_{\ell_j}, j=1,2$, $n^{-1}d(x^1,x^2)\ge \nu_2$. 
According to this definition the possibility of a unique
cluster---that is, $L=1$---is allowed.
This clustering property is obtained simply as follows. Define equivalency classes by saying $x\sim y$ iff $n^{-1}d(x,y)\le \nu_1$.
The reflexivity and symmetry are immediate. The transitivity is obtained as well: if $n^{-1}d(x,y), n^{-1}d(y,z) \le \nu_1$ then
$n^{-1}d(x,z)\le 2\nu_1<\nu_2$ and thus, by $(\nu_1,\nu_2)$-OGP, $n^{-1}d(x,z)\le \nu_1$. We call this the $(\nu_1,\nu_2)$-clustering of $A$.

Given a K-SAT formula $\Phi$ on variables with indices $[n]$ and clauses $C_1,\ldots,C_m$, we denote by ${\rm SAT}(\Phi)\subset \{0,1\}^n$ the (possibly empty) set of 
satisfying solutions of $\Phi$. Furthermore, for every $r\ge 0$ let ${\rm SAT}(\Phi,r)$ denote the set of solutions in $\{0,1\}^n$ which
violate at most $r$ clauses so that ${\rm SAT}(\Phi,0)={\rm SAT}(\Phi)$. 
Given a formula $\Phi$ and a subset $S\subset [n]$, let $C(S)$ be the set of clauses in $\Phi$ such that each clause in $C(S)$ consists entirely of variables
$x_i$ with  $i\in S$. This should not be confused with $C(x_i)$, which denotes the set of clauses containing the variable $x_i$.
Similarly, for every $r\ge 0$ let ${\rm SAT}(\Phi(C(S),r))\subset \{0,1\}^n$ denote the set of solutions in $\{0,1\}^n$ which violate
at most $r$ clauses among those in $C(S)$.
Given $\epsilon>0$ we denote by ${\rm SAT}(\Phi,\epsilon,r)$ the set $\bigcup_{S:|S|\ge (1-\epsilon)n}{\rm SAT}(\Phi(C(S),r))$
and let $Z(\Phi,\epsilon,r)=|{\rm SAT}(\Phi,\epsilon,r)|$.

Our main technical result is as follows. 

\begin{theorem}\label{theorem:clustering}
Let $\alpha\in (1/2+1/5,1)$. For large enough $n$, there exist $0<\nu_1<(1/2)\nu_2<\nu_2<1/2$ and $c_2>c_1>0$ such that the following holds.
There exist large enough $K$ and small enough $\bar \epsilon,\bar \lambda $ such that for every $0<\epsilon \le \bar \epsilon, 0\le \lambda\le \bar\lambda$
 the set ${\rm SAT}(\bfPhi(n,\alpha 2^K(\log 2)n),\epsilon,\lambda n)$ satisfies $(\nu_1,\nu_2)$-OGP w.h.p.\
as $n\to\infty$. Furthermore,  also w.h.p. as $n\to\infty$,
\begin{align*}
| {\rm SAT}(\bfPhi(n,\alpha 2^K(\log 2)n))|=| {\rm SAT}(\bfPhi(n,\alpha 2^K(\log 2)n),0,0)| \ge \exp( c_2 n)
\end{align*}
and the associated unique $(\nu_1,\nu_2)$-clustering $CL_\ell$ of 
${\rm SAT}(\bfPhi(n),\epsilon ,\lambda n)$ satisfies 
\begin{align*}
|CL_\ell |\le \exp(c_1 n),
\end{align*}
for all $\ell$.
\end{theorem}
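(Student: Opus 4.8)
The strategy is to establish the robust OGP and the exponential upper bound on cluster sizes by a first-moment (union bound) argument over pairs of near-satisfying assignments, mirroring the original approach of Achlioptas--Coja-Oghlan--Ricci-Tersenghi~\cite{AchlioptasCojaOghlanRicciTersenghi} but absorbing the extra slack coming from (i) allowing up to $\lambda n$ violated clauses and (ii) allowing the clause set to be restricted to $C(S)$ for $|S|\ge(1-\epsilon)n$. The key quantity to control is, for a distance level $d=\beta n$ with $\nu_1<\beta<\nu_2$, the expected number of ordered pairs $(x^1,x^2)\in{\rm SAT}(\bfPhi,\epsilon,\lambda n)^2$ with $d(x^1,x^2)=\beta n$. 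I would write this expectation as
\begin{align*}
\E N_\beta \le \binom{n}{\epsilon n}^2 \binom{n}{\beta n}\,\sum_{r_1,r_2\le\lambda n}\binom{m}{r_1}\binom{m}{r_2}\,p(\beta,\gamma_1,\gamma_2)^{m},
\end{align*}
where $p(\beta,\cdot)$ is the probability that a single random clause is "acceptable" for the pair given the overlap profile, and the $\binom{n}{\epsilon n}^2$ and $\binom{m}{r_i}$ factors account for the choice of excluded variables and violated clauses. The heart of the matter is the clause-acceptance probability: for a uniformly random $K$-clause and a fixed pair $x^1,x^2$ at normalized Hamming distance $\beta$, the probability that the clause is satisfied by both is $1-2\cdot 2^{-K}+(\text{correction depending on }\beta)$, and one computes that this correction makes $\log p(\beta)$ strictly concave-then-something in $\beta$ so that the exponential rate $\Phi(\beta)=H(\beta)+\alpha 2^K\log 2\cdot\log_2 p(\beta)/\log 2 + (\text{slack terms})$ is negative for $\beta$ in a sub-interval $(\nu_1,\nu_2)$ of $(0,1/2)$. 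Taking $K$ large pushes $\alpha 2^K\log 2$ large enough that the negative clause term dominates the entropy term $H(\beta)\le\log 2$ everywhere except near $\beta=0$; this is exactly where "large $K$" (rather than $K\ge 8$) is forced, because the slack terms $2 H(\epsilon)+2\lambda\log(m/\lambda)+2\lambda$ must also be dominated, and they are of order $\epsilon\log(1/\epsilon)+\lambda\log(1/\lambda)$, which we control by first choosing $K$, then choosing $\bar\epsilon,\bar\lambda$ small relative to the gap that $K$ buys us. Summing $\E N_\beta$ over the $O(n)$ values of $d$ in the forbidden band and applying Markov's inequality gives that w.h.p.\ no pair of near-satisfying assignments lies at normalized distance in $(\nu_1,\nu_2)$, which is the $(\nu_1,\nu_2)$-OGP.

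For the cluster-size upper bound $|CL_\ell|\le\exp(c_1 n)$: once OGP holds, each cluster has diameter at most $\nu_1 n$, so $|CL_\ell|\le |B(x,\nu_1 n)|\le\exp(H(\nu_1)n+o(n))$ for any $x\in CL_\ell$; taking $c_1$ slightly above $H(\nu_1)$ and $\nu_1$ small enough (which we may, by shrinking $\nu_1$ while keeping $\nu_1<\nu_2/2$) gives a $c_1$ as small as we like. Separately, the lower bound $|{\rm SAT}(\bfPhi)|\ge\exp(c_2 n)$ with $c_2>c_1$ is the statement that in the satisfiable regime $\alpha<1$ the number of genuine satisfying assignments is exponentially large w.h.p.; this is a standard second-moment / known-result input about random K-SAT below the satisfiability threshold $2^K\log 2$, and one checks that the first-moment rate $\log 2 + \alpha 2^K\log 2\cdot\log_2(1-2^{-K})\to(1-\alpha)\log 2>0$ as $K\to\infty$ gives $c_2\to(1-\alpha)\log 2$, which exceeds $c_1\approx H(\nu_1)$ once $\nu_1$ is small. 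So the ordering $c_2>c_1$ is arranged by the choice of constants in the order: fix $\alpha$, take $K$ large (fixing the OGP band $(\nu_1,\nu_2)$ and $c_2\approx(1-\alpha)\log 2$), then shrink $\nu_1$ so $H(\nu_1)<c_2$, set $c_1$ between them, and finally shrink $\bar\epsilon,\bar\lambda$.

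The main obstacle I anticipate is the clause-acceptance probability computation under the \emph{restricted} clause set $C(S)$: a clause in $C(S)$ is not a uniform random $K$-clause on $[n]$ but a uniform random $K$-clause conditioned on using only variables in $S$, and moreover the event "$x$ violates at most $\lambda n$ clauses of $C(S)$" couples the choice of $S$ with the clause structure. The clean way around this is to union-bound over $S$ first (paying $\binom{n}{\epsilon n}\le\exp(H(\epsilon)n)$), then over the at-most-$\lambda n$ violated clauses (paying $\binom{m}{\lambda n}\le\exp(\lambda n\log(e/\lambda)\cdot(m/n))$, a constant times $n$), and only then compute the per-clause probability for the \emph{remaining} clauses, which must all be genuinely satisfied by both $x^1,x^2$ — for those the per-clause rate is the clean $\log(1 - 2\cdot 2^{-K} + q(\beta))$ with $q(\beta)$ the probability a random clause's unique falsifying assignment agrees with the region where $x^1,x^2$ disagree. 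The delicate part is bookkeeping: showing the total exponential rate
\begin{align*}
\Phi(\beta)=H(\beta)+2H(\epsilon)+2\tfrac{m}{n}H(\lambda)+\alpha 2^K(\log 2)\log\!\big(1-2\cdot2^{-K}+q(\beta)\big)
\end{align*}
is strictly negative for all $\beta\in(\nu_1,\nu_2)$ simultaneously over all admissible $\epsilon\le\bar\epsilon$, $\lambda\le\bar\lambda$, which requires the explicit small-$\beta$ and near-$1/2$ behavior of $q(\beta)$ and a careful choice of $\nu_1,\nu_2$ bounded away from the zeros of the $K\to\infty$ limiting rate function. I would handle this by first analyzing the $\lambda=\epsilon=0$ rate (recovering an ACORT-type statement for large $K$ with an explicit band) and then doing a continuity/perturbation argument to absorb the $O(\bar\epsilon\log(1/\bar\epsilon))+O(\bar\lambda\log(1/\bar\lambda))$ corrections.
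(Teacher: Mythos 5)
Your plan for the OGP part follows the same overall strategy as the paper: a first-moment/union-bound over ordered pairs of near-satisfying assignments, with the slack from choosing $S_1,S_2$ and up to $\lambda n$ violated clauses absorbed into small error terms, followed by a rate-function analysis in $K$. One genuine error in the bookkeeping, though: your expression for $\E N_\beta$ omits the $2^n$ factor from choosing $x^1$ (the number of ordered pairs at normalized Hamming distance $\beta$ is $2^n\binom{n}{\beta n}$, not $\binom{n}{\beta n}$), and consequently your rate function $\Phi(\beta)$ is missing an additive $\log 2$. This is not cosmetic: without that $\log 2$ the leading terms near small $\beta$ are $H(\beta)-2\alpha\log 2<0$, which would spuriously give OGP at all distances $\beta>0$; with the $\log 2$ the rate is positive near $\beta\approx 1/2$ as it must be (independent solutions typically lie at distance $\approx n/2$), and the forbidden band $(\nu_1,\nu_2)$ is genuinely a nontrivial sub-interval of $(0,1/2)$. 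Your observation about having to bound the per-clause probability under the conditioning on $C(S)$ by union-bounding over $S$ first and then over violated clauses is exactly how the paper's Lemma~\ref{lemma:upper-SAT-2} proceeds.

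For the cluster-size bound you take a genuinely different route from the paper, and it is worth comparing. You argue deterministically that, once OGP holds, each cluster has diameter at most $\nu_1 n$, hence $|CL_\ell|\le\exp(H(\nu_1)n+o(n))$, giving $c_1\approx H(\nu_1)$ that you then make small by shrinking $\nu_1$. The paper instead bounds $\bigl(\max_\ell|CL_\ell|\bigr)^2$ by $\sum_\ell|CL_\ell|^2$, which equals $\sum_{t\ge(1-\nu_1)n}Z_2$ and is controlled by the same first-moment estimate already computed (Corollary~\ref{coro:trim}), yielding $c_1=\tfrac12(\log 2)(1-\alpha)+O(\delta)$ which is automatically half of $c_2=(\log 2)(1-\alpha)-O(\delta)$; no extra tuning of $\nu_1$ is needed, and $\nu_1,\nu_2$ can be chosen once, uniformly over the admissible range of $\alpha$. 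Your volume bound is valid and arguably more transparent, but it forces $\nu_1$ to depend on $\alpha$ through $H(\nu_1)<(\log 2)(1-\alpha)$, and your stated quantifier order is backwards: you say ``take $K$ large (fixing the OGP band), then shrink $\nu_1$,'' but the negativity of the rate at $s=1-\nu_1$ hinges on the term $\alpha(\log 2)(1-\nu_1)^K$ being small, so $\nu_1$ must be fixed first and $K$ chosen large afterwards (with $K\nu_1\to\infty$). After that reordering your argument goes through and is a legitimate alternative to the second-moment bound the paper uses.
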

The theorem roughly speaking says that the set of all nearly optimal assignments of every nearly complete  subset of  variables
(with respect to the  induced set of clauses) exhibits an OGP. Furthermore, the implied clustering of this set 
has the property that each cluster constitutes at most an 
exponentially small fraction (at most roughly $2^{-(c_2-c_1)n})$) of the total set of these assignments. The remainder of the section is devoted to the proof of this theorem.

\subsection{Lower bound on the number of satisfying assignments}

In this Section we quote relevant results from prior literature which give lower bounds on the cardinality of the satisfying set ${\rm SAT}(\bfPhi(n,m))$; that is, the number of satisfying assignments of our random formula $\bfPhi(n,m)$.

\begin{theorem}\label{theorem:sat-lower-bound}
Suppose $\beta< 2^K (\log 2)-\frac{(K+1)(\log 2)+3}{2}$ and $K\ge 3$. Then w.h.p.\ as $n\to\infty$ $\bfPhi(n,\beta n)$ is satisfiable. Furthermore,
w.h.p.\ as $n\to\infty$
\begin{align*}
n^{-1}\log |{\rm SAT}(\bfPhi(n,\beta n))|\ge \log 2+ (1/2)\beta \log\left(1-2^{1-K}+2^{-2K}-K2^{-K}\left(1-2^{-K}\right)\left(2^{1-K}+3K2^{-2K}\right)\right)
-o(1).
\end{align*}
\end{theorem}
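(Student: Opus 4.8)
The plan is to establish Theorem~\ref{theorem:sat-lower-bound} via the standard \emph{second moment method} applied to the number of satisfying assignments, following the approach of Achlioptas and Moore and of Achlioptas and Peres. Let $Z = Z(\bfPhi(n,\beta n)) = |{\rm SAT}(\bfPhi(n,\beta n))|$. First I would compute the first moment: since each of the $m=\beta n$ clauses is chosen uniformly and independently, a fixed assignment $x\in\{0,1\}^n$ satisfies a given random clause with probability exactly $1 - 2^{-K}$ (the unique forbidden pattern $v(C)$ on the $K$ chosen literal-slots has probability $2^{-K}$, accounting for the ordered-with-repetition sampling), so $\E[Z] = 2^n (1-2^{-K})^{\beta n}$. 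Taking logarithms, $n^{-1}\log\E[Z] = \log 2 + \beta \log(1-2^{-K})$, which by $\log(1-2^{-K}) \ge -2^{-K}-O(2^{-2K})$ is positive precisely in the regime $\beta \lesssim 2^K \log 2$; this is consistent with (and slightly better than) the target bound, and it is where the hypothesis $\beta < 2^K(\log 2) - \tfrac{(K+1)\log 2 + 3}{2}$ comes from once the second-moment correction is folded in.

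The core of the proof is the second moment estimate. I would write $\E[Z^2] = \sum_{x,y} \Pr[x,y \text{ both SAT}] = \sum_{x,y} \left(f(n^{-1}d(x,y))\right)^{\beta n}$, where for two assignments at Hamming distance $d = zn$ the probability that a single random clause is satisfied by both is a function $f(z)$ of the overlap only — explicitly $f(z)$ is $1$ minus the probability the clause's forbidden pattern is consistent with $x$ (so that $x$ fails) plus the symmetric term minus the inclusion–exclusion overlap; a short computation gives a closed form like $f(z) = 1 - 2^{1-K} + \big(\tfrac{1+z}{2}\big)^K \cdot (\text{correction})$, and in any case $f(1/2) = (1-2^{-K})^2$, matching independence at zero correlation. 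Then $\E[Z^2] = \sum_{z} \binom{n}{zn} 2^n f(z)^{\beta n}$, and by Stirling this is $\exp\!\big(n \max_{z\in[0,1]} g(z) + o(n)\big)$ with $g(z) = \log 2 + H(z) + \beta \log f(z)$, where $H$ is the binary entropy. The decisive claim is that the maximum of $g$ is attained at $z = 1/2$, where $g(1/2) = \log 2 + \log 2 + \beta \log(1-2^{-K})^2 = 2\big(\log 2 + \beta\log(1-2^{-K})\big) = 2 n^{-1}\log\E[Z]$. Granting this, $\E[Z^2] \le C\, \E[Z]^2$ for a constant $C$ (polynomial corrections from Stirling are absorbed), so by the Paley–Zygmund inequality $\Pr[Z > 0] \ge \E[Z]^2 / \E[Z^2] \ge 1/C > 0$; upgrading from positive probability to w.h.p.\ is then routine — either by the sharp-threshold machinery of Friedgut, or by the standard observation that satisfiability is monotone and has a sharp threshold so a constant lower bound on the probability at density $\beta$ forces w.h.p.\ satisfiability at any density $\beta' < \beta$ (which is why the hypothesis is a strict inequality with the extra slack term). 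The quantitative lower bound on $\log|{\rm SAT}|$ in the theorem statement then follows because, conditioned on $Z>0$, concentration of $\log Z$ around $\log \E[Z]$ (via the second moment bound plus a standard argument that $Z$ is not too small on a macroscopic fraction of the probability mass, or via an Azuma-type edge-exposure martingale on $\log Z$) yields $n^{-1}\log Z \ge n^{-1}\log\E[Z] - o(1)$ w.h.p.; expanding $\log(1-2^{-K})$ and the $f$-correction terms to the order shown reproduces the stated expression $\log 2 + (1/2)\beta\log\!\big(1 - 2^{1-K} + 2^{-2K} - K2^{-K}(1-2^{-K})(2^{1-K}+3K2^{-2K})\big)$.

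The main obstacle is verifying that $z=1/2$ is the \emph{global} maximizer of $g(z) = \log 2 + H(z) + \beta \log f(z)$ on $[0,1]$, not merely a local one. Symmetry of $f$ about $z=1/2$ and of $H$ makes $z=1/2$ a critical point, but $H$ is concave while $\log f$ is convex near the endpoints, so for $\beta$ too large a competing maximum can appear near $z=0$ or $z=1$ (the ``condensation''/clustering phenomenon); controlling this is exactly what pins down the admissible range of $\beta$ and forces the $-\tfrac{(K+1)\log 2+3}{2}$ correction. I would handle it by the now-standard route: show $g''(1/2) < 0$ (local max), then bound $g$ on $[0,\delta]\cup[1-\delta,1]$ by a direct estimate using $H(z) \le -z\log z + z$ and $\log f(z) \le \log(1-2^{1-K}) + O(z^K)$ near $0$, and on the middle range $[\delta, 1-\delta]$ use that $g$ has no other critical points by analyzing the sign of $g'$ — or, more cleanly, cite the weighted/symmetric second moment refinement (assigning to each assignment a weight depending on the number of satisfied literals per clause, à la Achlioptas–Peres) which flattens the endpoint contributions and makes the $z=1/2$ dominance hold in the full claimed range. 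Since the theorem is explicitly quoted from prior literature, I would in fact cite the Achlioptas–Peres / Achlioptas–Moore result directly for the satisfiability half and the Coja-Oghlan–Panagiotou type second-moment concentration for the counting half, and only sketch the first-moment computation and the structure of $f$ and $g$ to make the constant in the bound transparent.
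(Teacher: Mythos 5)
The paper does not prove this theorem at all: immediately after the statement it says ``This theorem can be found in~\cite{AchlioptasCojaOghlanRicciTersenghi} as Theorem 6 complemented with a lower bound (26),'' and that one-line citation is the entire proof. Your proposal is therefore necessarily ``different'' from the paper's in the trivial sense that you reconstruct the underlying argument while the paper just cites it; but as a reconstruction it is on the right track, and in particular you correctly diagnose the central obstruction that makes this nontrivial — that the vanilla second moment on $Z$ fails in the target density range because $g(z)=\log 2 + H(z) + \beta\log f(z)$ develops a competing maximum away from $z=1/2$, and that the Achlioptas--Peres balanced/weighted second moment is what rescues $z=1/2$ dominance up to $\beta \approx 2^K\log 2 - O(K)$. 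That is indeed the mechanism in the cited work, and your decision to defer to it rather than re-derive the full optimization is appropriate given that the paper itself treats this as a black box.

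Two caveats on details worth flagging. First, the way you transition from the Paley--Zygmund lower bound to the w.h.p.\ counting statement $n^{-1}\log Z \ge n^{-1}\log\E[Z] - o(1)$ is not quite right as stated: an Azuma/edge-exposure martingale on $\log Z$ does not work, because $\log Z$ is not boundedly Lipschitz in the clause choices (a single added clause can send $Z$ to zero, so the Lipschitz constant is unbounded and the martingale argument collapses). The argument actually used in~\cite{AchlioptasCojaOghlanRicciTersenghi} couples the random formula to the \emph{planted} model and transfers a lower bound on $\log Z$ that holds w.h.p.\ there; alternatively one can observe that the event $\{\log Z \ge cn\}$ is monotone decreasing in $m$ and use the constant-probability bound at density $\beta$ together with a sharp-threshold argument. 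Either is fine, but ``Azuma on $\log Z$'' should be dropped. Second, your remark that the first moment ``is positive precisely in the regime $\beta\lesssim 2^K\log 2$'' is true but slightly misleading as motivation for where the hypothesis comes from: the first moment stays exponentially large well beyond where the \emph{plain} second moment ratio blows up (around $2^{K-1}\log 2$), so the slack term $-\tfrac{(K+1)\log 2 + 3}{2}$ is not a ``second-moment correction to the first moment'' but rather the residual loss of the weighted second moment method itself. Neither caveat affects the overall soundness of your plan, and since the paper supplies no proof of its own, citing ACORT exactly as you propose is the correct move.
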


This theorem can be found in~\cite{AchlioptasCojaOghlanRicciTersenghi} as Theorem 6 complemented with a lower bound (26).

Considering now the regime of interest to us, namely $\beta=\alpha 2^K\log 2, \alpha<1$, and using the Taylor expansion $\log(1+x)=x+O(x^2)$,
we can simplify the lower bound as  $\log 2-(\log 2)\alpha -O(1/2^K)$. We obtain the following corollary.

\begin{coro}\label{coro:sat-lower-bound}
For every $\alpha\in (0,1)$ and $\delta>0$, the following holds for all large enough $K$: w.h.p.\ as $n\to\infty$,
\begin{align}\label{eq:sat-lower-bound}
n^{-1}\log |{\rm SAT}(\bfPhi(n,\alpha 2^K (\log 2)n))|\ge (\log 2) (1-\alpha)-\delta.
\end{align}
\end{coro}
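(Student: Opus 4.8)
The plan is to derive Corollary~\ref{coro:sat-lower-bound} as a direct specialization of Theorem~\ref{theorem:sat-lower-bound}, so essentially no new ideas are needed beyond substituting the parameter regime and performing an asymptotic expansion in $K$. First I would check that the hypothesis of Theorem~\ref{theorem:sat-lower-bound} is satisfied for the choice $\beta=\alpha 2^K(\log 2)$ with $\alpha\in(0,1)$: since $\alpha<1$, we have $\beta = \alpha 2^K\log 2 < 2^K\log 2$, and the deficit term $\tfrac{(K+1)(\log 2)+3}{2}$ grows only linearly in $K$ while $2^K\log 2-\beta = (1-\alpha)2^K\log 2$ grows exponentially; hence for all sufficiently large $K$ (depending on $\alpha$) the inequality $\beta < 2^K(\log 2)-\tfrac{(K+1)(\log 2)+3}{2}$ holds, and also $K\ge 3$. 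So $\bfPhi(n,\alpha 2^K(\log 2)n)$ is satisfiable w.h.p.\ and the quantitative lower bound of Theorem~\ref{theorem:sat-lower-bound} applies.

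Next I would substitute $\beta=\alpha 2^K(\log 2)$ into the lower bound expression and simplify the logarithm. Write the argument of the log as $1+x$ where $x = -2^{1-K}+2^{-2K}-K2^{-K}(1-2^{-K})(2^{1-K}+3K2^{-2K})$. One checks that $x = -2^{1-K} + O(K^2 2^{-2K})$, so in particular $x\to 0$ as $K\to\infty$ and $|x| = O(2^{-K})$ after absorbing the polynomial factors. Using $\log(1+x) = x + O(x^2)$, we get $\log(1+x) = -2^{1-K} + O(K^2 2^{-2K})$. Multiplying by $(1/2)\beta = (1/2)\alpha 2^K\log 2$ gives
\begin{align*}
(1/2)\beta\log(1+x) = (1/2)\alpha 2^K(\log 2)\left(-2^{1-K}+O(K^22^{-2K})\right) = -(\log 2)\alpha + O(K^2 2^{-K}).
\end{align*}
Therefore $n^{-1}\log|{\rm SAT}(\bfPhi(n,\alpha 2^K(\log 2)n))| \ge \log 2 - (\log 2)\alpha + O(K^22^{-K}) - o(1) = (\log 2)(1-\alpha) + O(K^22^{-K}) - o(1)$.

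Finally, given $\delta>0$, I would choose $K$ large enough (depending on $\alpha$ and $\delta$) so that the error term $O(K^2 2^{-K})$ is bounded in absolute value by $\delta/2$; then for all sufficiently large $n$ the $o(1)$ term is also at most $\delta/2$ in absolute value, yielding $n^{-1}\log|{\rm SAT}(\bfPhi(n,\alpha 2^K(\log 2)n))|\ge (\log 2)(1-\alpha)-\delta$ w.h.p.\ as claimed. There is no real obstacle here; the only point requiring a modicum of care is tracking the polynomial-in-$K$ prefactors through the expansion of $x$ to confirm they are dominated by the exponential decay $2^{-K}$, i.e.\ that the combined term $(1/2)\beta \cdot O(x^2)$ and $(1/2)\beta \cdot (\text{subleading part of } x)$ are both $o(1)$ in $K$ — which they are, since each carries at least one extra factor of $2^{-K}$ relative to the leading $-(\log 2)\alpha$ term, more than enough to kill any $\mathrm{poly}(K)$ growth.
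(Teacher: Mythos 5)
Your proposal is correct and matches the paper's own derivation: the paper also obtains the corollary by substituting $\beta=\alpha 2^K\log 2$ into Theorem~\ref{theorem:sat-lower-bound} and Taylor-expanding $\log(1+x)$ to get $\log 2-(\log 2)\alpha-O(2^{-K})$, then taking $K$ large. You are slightly more careful in verifying the hypothesis $\beta<2^K\log 2-\tfrac{(K+1)\log 2+3}{2}$ and in tracking the $\mathrm{poly}(K)$ prefactors in the error term, but the argument is the same.
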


\subsection{Bounds on controlled first and second moments}

We fix $1/2>\epsilon>0$ and $\lambda>0$ (which are arbitrary for now), and $\alpha\in (0,1)$. Let $m=\alpha 2^K(\log 2) n$. We write $\bfPhi(n,m)$ as $\bfPhi(n)$ for short.
Recall that $Z(\bfPhi(n),\epsilon,\lambda n)=| {\rm SAT}(\bfPhi(n),\epsilon,\lambda n)|$.
 We now compute bounds on a ``controlled'' (appropriately defined) second moment of 
$Z(\bfPhi(n),\epsilon,\lambda n)$.

Fix $t\in \{0,1,\ldots,n\}$ and let ${\rm SAT}_2(\bfPhi(n),\epsilon,\lambda n,t)$ denote the set of ordered pairs $x^1,x^2\in {\rm SAT}(\bfPhi(n),\epsilon,\lambda n)$ such 
that $d(x^1,x^2)=n-t$. We let 
$Z_2(\bfPhi(n),\epsilon,\lambda n,t)=|{\rm SAT}_2(\bfPhi(n),\epsilon,\lambda n,t)|$. 
Note that $Z^2(\bfPhi(n),\epsilon,\lambda n)=\sum_t Z_2(\bfPhi(n),\epsilon,\lambda n,t)$. 

Fix $x^1,x^2\in\{0,1\}^n$, where $x^j$ has variable assignments $x_i^j, i\in [n]$ and $d(x^1,x^2)=n-t$. Fix $S_1,S_2\subset [n]$ with $|S_1|,|S_2|\ge (1-\epsilon)n$.
Let $S=S_1\cap S_2$.
Denote by $S(x^1,x^2)\subset S$ the set of coordinates  
in $S$ where $x^1$ and $x^2$ agree:
\begin{equation*}
    S(x^1,x^2)=\{i\in S: x^1_i=x^2_i\}.
\end{equation*}
Consider a clause $C$ in $\bfPhi$ conditioned on the event $C\in C(S_1)\cap C(S_2)=C(S)$. In particular, $C$ consists entirely of variables in $S$.
Let us compute the probability that $C$ is satisfied by both $x^1$ and $x^2$ when conditioned on this event.
We denote by $x\in {\rm SAT}(C)$ the event that string $x$ satisfies $C$. Let $E_C$ be the event that $C$ has distinct variables. Then (for sufficiently large $n$)
\begin{align*}
\pr(x^1,x^2 \in {\rm SAT}(C))&\leq 1-\pr\left(E_C\right)\left(\pr(x^1\notin {\rm SAT}(C)\mid E_C)+\pr(x^2 \notin {\rm SAT}(C)\mid E_C)-\pr(x^1,x^2 \notin {\rm SAT}(C)\mid E_C)\right)\\
&=1-\left(2/2^K-\pr(x^1,x^2 \notin {\rm SAT}(C))\right)\prod\limits_{i=1}^{K-1}\left(1-\frac{i}{n}\right)\\
&\leq 1-\left(1-O\left(\frac{K}{n}\right)\right)\left(2/2^K-\pr(x^1,x^2 \notin {\rm SAT}(C)\mid E_C)\right).
\end{align*}
We claim that 
\begin{align}\label{eq:Sx1x2-vsS}
\pr(x^1,x^2 \notin {\rm SAT}(C)\mid E_C)\leq 2^{-K}\left( |S(x^1,x^2)|/ |S|\right)^K.
\end{align}
Indeed, in order for $C$ to be violated by both $x^1$ and $x^2$ it has to be the case that $C$ consists entirely of variables
in $S(x^1,x^2)$, since otherwise it contains a variable $x_i$ where $x^1_i\ne x^2_i$ and at least one of $x^1,x^2$ satisfies $C$.
Conditioned on the event that $C$ consists entirely of variables in $S(x^1,x^2)$ the likelihood that it is violated (by both $x^1,x^2$) is $2^{-K}$. Similarly, the likelihood that $C$ consists entirely of variables
in $S(x^1,x^2)$ is upper-bounded $\left(|S(x^1,x^2)|/|S|\right)^K$. The claim follows.

We now obtain a bound on the likelihood that $x^1$ and $x^2$ belong to the set 
${\rm SAT}_2(\bfPhi(n),\epsilon,\lambda n,t)$ simultaneously.

\begin{lemma}\label{lemma:upper-SAT-2}
There exist functions $g,h:\R_+\to\R_+$ which depend on $\alpha,K$ satisfying $\lim_{z\to 0}g(z)=\lim_{z\to 0}h(z)=0$ such that for
any $x^1,x^2$ with $d(x_1,x_2)=n-t$,  for all large enough $n$ it holds that:
\begin{align*}
\pr\left(x^1,x^2 \in {\rm SAT}_2(\bfPhi(n),\epsilon,\lambda n,t) \right)
&\le 
n^2
\exp\left( g(\epsilon)n+h(\lambda) n\right)\left(1-\left(1-O\left(\frac{K}{n}\right)\right)\left(2/2^K-2^{-K}\left(\frac{t}{n}\right)^K\right)\right)^m.
\end{align*}
\end{lemma}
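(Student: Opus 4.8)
The plan is to perform a union bound over the choices of witnessing subsets $S_1,S_2$, condition on which clauses land entirely inside $S=S_1\cap S_2$, and then bound the probability that both $x^1$ and $x^2$ satisfy ``almost all'' of those clauses. First I would recall that $x^j\in{\rm SAT}(\bfPhi(n),\epsilon,\lambda n)$ means there is a subset $S_j$ with $|S_j|\ge(1-\epsilon)n$ such that $x^j$ violates at most $\lambda n$ clauses among $C(S_j)$. Fixing such $S_1,S_2$ and setting $S=S_1\cap S_2$, I would observe $|S|\ge(1-2\epsilon)n$, and that any clause $C\in C(S)$ lies in both $C(S_1)$ and $C(S_2)$, so among the clauses in $C(S)$ the pair $(x^1,x^2)$ jointly violates at most $2\lambda n$ of them (at most $\lambda n$ from each side). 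The number of choices of the pair $(S_1,S_2)$ is at most $4^n$; more carefully one can absorb this into the $\exp(g(\epsilon)n)$ term since $|S_j^c|\le\epsilon n$ gives at most $\binom{n}{\le\epsilon n}^2\le\exp(2H(\epsilon)n)$ choices, and $H(\epsilon)\to0$ as $\epsilon\to0$. The two explicit factors of $n$ in the bound come from summing over the integer parameters $|C(S)|$ (at most $m+1\le O(n)$ values, or rather the two independent counts) — one factor absorbs the number of distinct values of $|S|$ and another the number of values of the violated-clause count, all of which are $O(n)$, with the product $n^2$ being a generous slack.

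Next I would condition on the random set $\mathcal J\subset[m]$ of clauses that fall entirely within $S$. For each clause independently $\pr(C\in C(S))=(|S|/n)^K \le 1$, so $|\mathcal J|$ is Binomial$(m,(|S|/n)^K)$; I would note $m(|S|/n)^K \ge m(1-2\epsilon)^K$, which is $(1-h_1(\epsilon))m$ for a function $h_1$ with $h_1(\epsilon)\to0$. Conditionally on $|\mathcal J|=M$, the $M$ clauses in $C(S)$ are i.i.d.\ uniform clauses on the variables of $S$, and by the computation preceding the lemma each one is jointly satisfied by $(x^1,x^2)$ with probability at most $1-(1-O(K/n))(2/2^K - 2^{-K}(t/n)^K)$ — here I use that $|S(x^1,x^2)|\le$ (number of coordinates in $S$ where $x^1=x^2$) $\le t$ since $d(x^1,x^2)=n-t$, so $(|S(x^1,x^2)|/|S|)^K \le (t/|S|)^K$; absorbing the mismatch between $|S|$ and $n$ into an $\exp(h_2(\epsilon)n)$ correction is routine. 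The event that $(x^1,x^2)$ jointly violates at most $2\lambda n$ of these $M$ clauses then costs a binomial-tail factor $\binom{M}{\le 2\lambda n}\le\exp(h_3(\lambda)n)$ with $h_3(\lambda)\to0$ as $\lambda\to0$, while on the complementary clauses (at least $M-2\lambda n \ge (1-h_1(\epsilon))m - 2\lambda n$ of them) the bound $\bigl(1-(1-O(K/n))(2/2^K-2^{-K}(t/n)^K)\bigr)$ applies to each independently. Pushing the exponent from $M-2\lambda n$ up to $m$ only loses another factor of the form $\exp(h_4(\epsilon,\lambda)n)$, since replacing an exponent below $m$ by $m$ multiplies by $(\text{prob}\le 1)^{-(\text{deficit})}$ and the deficit is $(h_1(\epsilon)+2\lambda)m = O_\epsilon(n)+O_\lambda(n)$. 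Collecting the $\epsilon$-type corrections into $g(\epsilon)$ and the $\lambda$-type corrections into $h(\lambda)$, both vanishing at $0$, yields exactly the claimed bound.

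The main obstacle is bookkeeping the dependence structure correctly: the set $\mathcal J$ of ``internal'' clauses is itself random and depends on the same randomness as the clause contents, so I must be careful to condition first on \emph{which} clauses are internal and then use that, given that, their literals are uniform over $S$'s variables and independent across clauses — this is where the independence of the $m$ clauses in the random-formula model is essential. A secondary subtlety is that the per-clause bound proved earlier was conditioned on $E_C$ (distinct variables); I would either carry the unconditional bound $1-(1-O(K/n))(2/2^K-2^{-K}(|S(x^1,x^2)|/|S|)^K)$ throughout, as already done in the displayed inequalities preceding the lemma, or note that the $(1-O(K/n))$ factor uniformly dominates the $\pr(E_C)$ correction. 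Finally, one must check that the various functions $g,h$ depend only on $\alpha$ and $K$ and not on $n$ — this holds because $m=\alpha 2^K(\log 2)n$ makes every ``$O_\epsilon(m)$'' term a constant (in $\epsilon$, $K$, $\alpha$) times $n$, and the entropy-type bounds $H(\epsilon)$, $H(2\lambda)$ are $n$-free.
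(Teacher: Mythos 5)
Your plan follows the same overall route as the paper: union over the witnessing subsets, condition on which clauses fall entirely inside $S = S_1\cap S_2$, apply the per-clause joint-satisfaction bound, and pay a small binomial-tail penalty for the at-most-$2\lambda n$ violated clauses. However, there is a genuine gap in the treatment of the random variable $M = |C(S)|$. Conditioned on $|\mathcal J| = M$, your per-clause bound $p$ enters with exponent $M - 2\lambda n$, and you then replace this exponent by $m$ at the cost of a factor $(1/p)^{m - M + 2\lambda n}$, claiming the deficit $m - M + 2\lambda n$ is $(h_1(\epsilon)+2\lambda)m$. That only uses the lower bound $\mathbb E[M] = m(|S|/n)^K \ge (1-h_1(\epsilon))m$ on the \emph{mean}; the random variable $M$ itself can be far smaller, in which case the deficit is $\Theta(m)$ and the factor $(1/p)^{m-M+2\lambda n}$ is exponentially large, not absorbed into $\exp(h_4(\epsilon,\lambda)n)$. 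You cannot simply discard small $M$ as atypical, because you need an unconditional probability bound.

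The paper avoids this by explicitly averaging over $m_0 = |C(S)|$: it bounds $\pr(|C(S)|=m_0)\le\binom{m}{m_0}\bigl(1-(1-2\epsilon)^K\bigr)^{m-m_0}$ and collapses the sum $\sum_{m_0}\binom{m}{m_0}\bigl(1-(1-2\epsilon)^K\bigr)^{m-m_0}p^{m_0-2\lambda n}$ via the binomial theorem; equivalently, one evaluates the generating function $\mathbb E\bigl[p^{|C(S)|}\bigr]\le\bigl(p+1-(1-2\epsilon)^K\bigr)^m$. When $m_0$ is small the weight $\pr(|C(S)|=m_0)$ is small enough to cancel the worse conditional bound, and this cancellation is exactly what the binomial theorem tracks but your ``max and push'' step does not. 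The remainder of your plan --- the union over $(S_1,S_2)$ costing $\exp(O(H(\epsilon))n)$, the conditional i.i.d.\ uniformity of the internal clauses, the use of $|S(x^1,x^2)|\le t$ with the $|S|$-vs-$n$ mismatch absorbed into $g(\epsilon)$, and the binomial-tail bound $\binom{M}{\le 2\lambda n}\le\exp(h_3(\lambda)n)$ --- matches the paper's argument and is correct.
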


\begin{proof}
Fix $x^1,x^2$ and $S_j, |S_j|\ge (1-\epsilon) n, j=1,2$. As before let $S=S_1\cap S_2$. Then $|S|\ge (1-2\epsilon)n$
and $C(S)\subset C(S_j), j=1,2$. We assume that $x^j$ satisfies all but $\lambda n$ clauses in $C(S_j)$. Then
$x^1$ and $x^2$ satisfy at least $|C(S)|-2\lambda n$ clauses in the set of clauses $C(S)$.  We trivially have that $|S(x^1,x^2)|\le t$.
Fix $m_0\le m$ and fix any subset $\mathcal{C}\subset [m]$ with cardinality $m_0$.
Conditioned on the event $C(S)=\{C_j, j\in \mathcal{C}$\}, by our earlier claim and the union bound, the event $x^1,x^2\in {\rm SAT}(\bfPhi(n), C(S), \lambda n)$ 
occurs with probability at most
\begin{equation*}
\begin{aligned}
\sum_{r\le \lambda n}{\binom{m_0}{r}} &\left(1-\left(1-O\left(\frac{K}{n}\right)\right) \left(2/2^K+2^{-K}\left( |S(x^1,x^2)|/ |S|\right)^K\right)\right)^{m_0-r}\\
&\le n {\binom{m}{\lambda n}} \left(1-\left(1-O\left(\frac{K}{n}\right)\right)\left(2/2^K-2^{-K}\left( {\frac{t}{(1-2\epsilon)n}}\right)^K\right)\right)^{m_0-2\lambda n}.
\end{aligned}
\end{equation*}

The likelihood of $|C(S)|=m_0$ is
\begin{align*}
{\binom{m}{m_0}}\left({\frac{|S|}{n}} \right)^{Km_0} \left(1- \left({\frac{|S|}{n}}\right)^K \right)^{(m-m_0)}  
&\le
{\binom{m}{m_0}}\left({ \frac{|S|}{n}} \right)^{Km_0} 
\left(1- \left(1-2\epsilon\right)^K \right)^{(m-m_0)} \\
&\le 
{\binom{m}{m_0}}\left(1- \left(1-2\epsilon\right)^K\right)^{m-m_0}.
\end{align*} 
We obtain
\begin{align}
&\pr\left(x^1,x^2\in {\rm SAT}(\bfPhi(n), C(S), \lambda n) \right) \notag\\
&\le 
\sum_{m_0\le m}{\binom{m}{m_0}}\left(1- \left(1-2\epsilon\right)^K\right)^{m-m_0}
n  {\binom{m}{\lambda n}}\left(1+\left(1-O\left(\frac{K}{n}\right)\right)\left(2/2^K-2^{-K}\left( {\frac{t}{(1-2\epsilon)n}}\right)^K\right)\right)^{m_0-2\lambda n} \notag \\
&=
n {\binom{m}{\lambda n}} 
\left(1+(1- \left(1-2\epsilon\right)^K)-\left(1-O\left(\frac{K}{n}\right)\right)\left(2/2^K-2^{-K}\left( {\frac{t}{(1-2\epsilon)n}}\right)^K\right)\right)^{m}\notag \\
&\times\left(1-\left(1-O\left(\frac{K}{n}\right)\right)\left(2/2^K-2^{-K}\left( {\frac{t}{(1-2\epsilon)n}}\right)^K\right)\right)^{-2\lambda n}  \notag \\
&\le 
n {\binom{m}{\lambda n}} 
\left(1+(1- \left(1-2\epsilon\right)^K)-\left(1-O\left(\frac{K}{n}\right)\right)\left(2/2^K-2^{-K}\left( {\frac{t}{(1-2\epsilon)n}}\right)^K\right)\right)^{m}
\left(1-2^{-K+1}\right)^{-2\lambda n} . \label{eq:m-bound}
\end{align}
Taking now a union bound over the choices of $S$ we obtain an upper bound on 
$\pr\left(x^1,x^2 \in {\rm SAT}_2(\bfPhi(n),\epsilon,\lambda n,t) \right)$ which is (\ref{eq:m-bound}) multiplied by
$\sum_{(1-2\epsilon) n\le i\le n}{\binom{n}{i}}\le (2\epsilon n) {\binom{n}{2\epsilon n}}$. Thus:
\begin{align*}
&\pr\left(x^1,x^2 \in {\rm SAT}_2(\bfPhi(n),\epsilon,\lambda n,t) \right) \\
&\le 
 (2\epsilon n) {\binom{n}{2\epsilon n}}
 n {\binom{m}{\lambda n}} 
 \left(1+(1- \left(1-2\epsilon\right)^K)-\left(1-O\left(\frac{K}{n}\right)\right)\left(2/2^K-2^{-K}\left( {\frac{t}{(1-2\epsilon)n}}\right)^K\right)\right)^{m}
\left(1-2^{-K+1}\right)^{-2\lambda n} . 
\end{align*}
We recall that $m\le 2^K (\log 2) n=\Theta(n)$. We also have that: 
\begin{align*}
{\binom{n}{2\epsilon n}}\le\left({\frac{ne}{2\epsilon n}}\right)^{2\epsilon n}=\exp(2\epsilon \log(e/(2\epsilon)) n).
\end{align*}
We have that $g_0(\epsilon)\triangleq 2\epsilon \log(e/(2\epsilon)) \to 0$ as $\epsilon\to 0$. We treat similarly the other terms dependent on $\epsilon$
and $\lambda$. In particular, we
use that $1- \left(1-2\epsilon\right)^K\to 0$ as
$\epsilon\to 0$.
Finally, we use that $(2\epsilon n)n\le n^2$, and complete the proof.
\end{proof}

The cardinality of the set of pairs $x^1,x^2$ with $d(x^1,x^2)=n-t$ is $2^n{n\choose t}$. Applying Lemma~\ref{lemma:upper-SAT-2}
we obtain:
\begin{align*}
\E[Z_2(\bfPhi(n),\epsilon,\lambda n,t)]
\le 2^n{n\choose t}
n^2\exp\left( g(\epsilon)n+h(\lambda) n\right)\left(1-\left(1-O\left(\frac{K}{n}\right)\right)\left(2/2^K-2^{-K}\left({\frac{t}{n}}\right)^K\right)\right)^m
\end{align*}
for all sufficiently large $n$.
By allowing $t$ to be in a range of values of the form $a_1n\le t\le a_2 n$ we also obtain:
\begin{align}\label{eq:sum-Z2}
\E\left[\sum_{a_1n\le t\le a_2 n}Z_2(\bfPhi(n),\epsilon,\lambda n,t)\right]
\le &
n^3\exp\left( g(\epsilon)n+h(\lambda) n\right)\notag
\\
&\times\max_{a_1n\le t\le a_2 n}2^n{n\choose t}\left(1-\left(1-O\left(\frac{K}{n}\right)\right)\left(2/2^K-2^{-K}\left({\frac{t}{n}}\right)^K\right)\right)^m.
\end{align}
Here, the extra factor $n$ is associated with summing over $a_1n\le t\le a_2 n$ which is at most $(a_2-a_1)n\le n$.

We extract from the right-hand side the expression  under the max which  dominates the entire expression in exponential in $n$ terms.
In particular, let
\begin{align*}
z_2(n,t,K,\alpha)\triangleq 
2^n{\binom{n}{t}}\left(1-\left(1-O\left(\frac{K}{n}\right)\right)\left(2/2^K-2^{-K}(t/n)^K\right)\right)^m,
\end{align*}
so that the expression under the max is $\max_{a_1n\le t\le a_2 n}z_2(n,t,K,\alpha)$.
Using ${n\choose t}=\exp( n H(t/n)+o(n))$ where $H(x)=-x\log x-(1-x)\log(1-x)$ is the binary entropy function, we obtain:
\begin{align*}
n^{-1}\log z_2(n,t,K,\alpha)
=
\log(2)+H(t/n)+(m/n)\log\left(1-\left(1-O\left(\frac{K}{n}\right)\right)\left(2/2^K-2^{-K}\left(t/n\right)^K\right)\right)+o(1).
\end{align*}
Recall that $m=\alpha 2^K(\log 2) n, \alpha<1$.
We also write $t=sn$. We obtain after Taylor expanding the $\log$ function that
\begin{align*}
n^{-1}\log z_2(n,t,K,\alpha)&=\log(2)+H(s)+\alpha(\log 2) 2^K\log\left(1-\left(1-O\left(\frac{K}{n}\right)\right)\left(2/2^K-2^{-K} s^K\right)\right) +o(1) \\
&=\log(2)+H(s)-2(\log 2) \alpha+(\log 2)\alpha s^K+O_K\left(2^{-K}\right) +o(1) \\
&\triangleq C(\alpha,s,K)+o(1). 
\end{align*}

\begin{lemma}\label{lemma:rate-exponent}
There exist $0<\nu_1<2\nu_1<\nu_2<1/2$  such that for every $\alpha\in (1/2+1/10,1)$  the following holds.
\begin{align}
& \lim_{K\to\infty}\sup_{s\in [1-\nu_1,1]}C(\alpha,s,K) = \lim_{K\to\infty}C(\alpha,1,K)=(\log 2)(1-\alpha)>0,   \label{eq:[nu,1]} \\
& \lim_{K\to\infty}\sup_{s\in [1-\nu_2,1-\nu_1]}C(\alpha,s,K) \le (\log 2)(1/2-\alpha)\le -(\log 2)/10<0.  \label{eq:[nu,nu]}
\end{align}
\end{lemma}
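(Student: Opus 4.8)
The plan is to reduce both displayed limits to elementary properties of the binary entropy function $H$ together with the fact that $s^K\to 0$ for fixed $s<1$. First I would pass to the variable $u=1-s$: since $H(s)=H(1-s)=H(u)$ and the constant terms collect as $\log 2-2(\log 2)\alpha+(\log 2)\alpha=(\log 2)(1-\alpha)$, the definition of $C$ rewrites as
\[
C(\alpha,s,K)=(\log 2)(1-\alpha)+\widetilde{f}_K(u)+O_K(2^{-K}),\qquad
\widetilde{f}_K(u):=H(u)-(\log 2)\alpha\bigl(1-(1-u)^K\bigr),
\]
where the remainder $O_K(2^{-K})$ is the first-order Taylor tail of $\alpha 2^K(\log 2)\log(1-2^{1-K}+2^{-K}s^K)$ and is therefore bounded by $c\,2^{-K}$ \emph{uniformly} over $s\in[0,1]$ and over $\alpha$ in the relevant range. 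Note $\widetilde{f}_K(0)=0$, and recall $H$ is continuous, increasing on $[0,1/2]$, with $H(0)=0$. I would then fix $\nu_2\in(0,1/2)$ small enough that $H(\nu_2)\le(\log 2)/10$ and set $\nu_1:=\nu_2/3$, so that $0<\nu_1<2\nu_1<\nu_2<1/2$; in particular $H(\nu_1)\le H(\nu_2)\le(\log 2)/10<(\log 2)\alpha$ for every $\alpha>1/2+1/10$.

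For (\ref{eq:[nu,1]}) the lower bound is immediate: evaluating at $s=1$ gives $\sup_{s\in[1-\nu_1,1]}C(\alpha,s,K)\ge C(\alpha,1,K)=(\log 2)(1-\alpha)+O_K(2^{-K})$, so $\liminf_K$ of the supremum is at least $(\log 2)(1-\alpha)$, which is positive since $\alpha<1$, and the same line gives $\lim_K C(\alpha,1,K)=(\log 2)(1-\alpha)$. For the matching upper bound I would prove $\sup_{u\in[0,\nu_1]}\widetilde{f}_K(u)\to 0$ by cutting the interval at $\delta_K:=(\log K)/K$. On $[0,\delta_K]$, since $1-(1-u)^K\ge 0$ we get $\widetilde{f}_K(u)\le H(u)\le H(\delta_K)\to 0$. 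On $[\delta_K,\nu_1]$, using monotonicity of $H$ and $(1-u)^K\le(1-\delta_K)^K\le e^{-K\delta_K}=1/K$,
\[
\widetilde{f}_K(u)\le H(\nu_1)-(\log 2)\alpha\,(1-1/K),
\]
which tends to $H(\nu_1)-(\log 2)\alpha<0$. Hence $\limsup_K\sup_{u\in[0,\nu_1]}\widetilde{f}_K(u)\le 0$, and combined with $\widetilde{f}_K(0)=0$ this forces $\lim_K\sup_{u\in[0,\nu_1]}\widetilde{f}_K=0$; feeding this back through the first display (and using uniformity of the $O_K(2^{-K})$ term to interchange the supremum over $s$ with the limit in $K$) yields $\lim_K\sup_{s\in[1-\nu_1,1]}C(\alpha,s,K)=(\log 2)(1-\alpha)=\lim_K C(\alpha,1,K)>0$.

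For (\ref{eq:[nu,nu]}) I would use that on the \emph{fixed} interval $u\in[\nu_1,\nu_2]$ one has $\widetilde{f}_K(u)-\bigl(H(u)-(\log 2)\alpha\bigr)=(\log 2)\alpha(1-u)^K\le(\log 2)\alpha(1-\nu_1)^K\to 0$ uniformly, so
\[
\lim_{K\to\infty}\sup_{u\in[\nu_1,\nu_2]}\widetilde{f}_K(u)=\sup_{u\in[\nu_1,\nu_2]}\bigl(H(u)-(\log 2)\alpha\bigr)=H(\nu_2)-(\log 2)\alpha,
\]
the last equality because $H$ is increasing on $[\nu_1,\nu_2]\subset[0,1/2]$. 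Therefore $\lim_K\sup_{s\in[1-\nu_2,1-\nu_1]}C(\alpha,s,K)=(\log 2)(1-2\alpha)+H(\nu_2)$, and since $\alpha-1/2>1/10$ the choice of $\nu_2$ gives $H(\nu_2)\le(\log 2)/10\le(\log 2)(\alpha-1/2)$, so this limit is at most $(\log 2)(1-2\alpha)+(\log 2)(\alpha-1/2)=(\log 2)(1/2-\alpha)\le-(\log 2)/10<0$, the final inequalities using $\alpha\ge 1/2+1/10$.

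The step I expect to be the main obstacle is the upper bound in (\ref{eq:[nu,1]}): the term $(\log 2)\alpha s^K$ does \emph{not} vanish uniformly on $[1-\nu_1,1]$ — indeed $\widetilde{f}_K$ genuinely exceeds $0$ near $s=1$, by an amount of order $2^{-\alpha K}$ — so one cannot argue by a pointwise limit. Splitting at a slowly shrinking $\delta_K$ with $\delta_K\to 0$ but $K\delta_K\to\infty$ is precisely what tames this; $\delta_K=(\log K)/K$ is the simplest choice making both $H(\delta_K)\to 0$ and $(1-\delta_K)^K\to 0$. The only other point needing a line of care is that the $O_K(2^{-K})$ remainder in the definition of $C$ is genuinely uniform in $s$ (and $\alpha$), which it is, being a Taylor tail of the $\log$; once that is noted, the suprema over $s$ commute with the limit in $K$ throughout.
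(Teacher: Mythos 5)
Your proof is correct and takes essentially the same route as the paper: both split the interval near $s=1$ at the scale $\delta_K=(\log K)/K$ so that $s^K$ is simultaneously bounded and eventually small, and both deduce the second limit from $s^K\to 0$ uniformly on $[1-\nu_2,1-\nu_1]$ together with monotonicity of $H$. Your change of variable $u=1-s$ and the explicit remark that the $O_K(2^{-K})$ remainder is uniform in $s$ are cosmetic clarifications of the same argument.
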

We note that  the first  identity regards only $\nu_1$. The choice of $\alpha>1/2+1/10$ is somewhat arbitrary and can be \change{replaced}
by any constant $1>c>1/2$ which does not depend on $K$.
\begin{proof}
Consider the part of $C(\alpha,s,K)$ which depends on $s$, namely, $H(s)+(\log 2)\alpha s^K$.
We consider separately the cases $s\le 1-\log K/K$ and $s> 1-\log K/K$. In the first case we note that $s^K=O(1/K)\to 0$ as $K\to\infty$.
To prove (\ref{eq:[nu,1]}),  by taking $\nu_1$ sufficiently close to $0$,  we obtain that 
$\max_{1-\nu_1\le s\le 1}H(s)$ is sufficiently close to $0$ and in particular is at most $(1/2)\log 2<(\log 2)\alpha$. 
 As a result, in this case:
\begin{align*}
\lim_{K \to \infty}\sup_{s\in [1-\nu_1,1-\log K/K]}C(\alpha,s,K) &\le \log(2)+(1/2)\log 2-2(\log 2) \alpha  \\
&< (\log 2)(1-\alpha).
\end{align*}
When $s>1-\log K/K$ we have that $(\log 2)\alpha s^K\le (\log 2)\alpha$ and $H(s)\to 0$ as $K\to\infty$, and thus (\ref{eq:[nu,1]}) is obtained
by taking $s=1$.

For (\ref{eq:[nu,nu]}) we note that for any fixed $0<\nu_1<2\nu_1<\nu_2<1/2$ we have that $\lim_K \sup_{1-\nu_2\le s\le 1-\nu_1}s^K=0$ and thus
\begin{align*}
\lim_{K\to\infty}\sup_{s\in [1-\nu_2,1-\nu_1]}C(\alpha,s,K) &\le \log(2)+H(1-\nu_2)-2(\log 2) \alpha \\
&= 2(\log 2)(1/2-\alpha)+H(1-\nu_2)\\
&\le -2(\log 2)/10+H(1-\nu_2).
\end{align*}
Taking $\nu_2$ sufficiently close to $0$  so that $ H(1-\nu_2)<(\log 2)/10$ 
we obtain (\ref{eq:[nu,nu]}). 
\end{proof}

We now use these  bounds for (\ref{eq:sum-Z2}) to arrive at the following corollary:
\begin{coro}\label{coro:trim}
There exist $0<\nu_1<2\nu_1<\nu_2<1/2$ such that for every $\alpha\in (1/2+1/10,1)$ and $\delta>0$  the following holds. For every  sufficiently large $K$
there exist $\bar\epsilon,\bar\lambda>0$ such that for all $\epsilon\le \bar\epsilon,\lambda\le \bar\lambda$, for all sufficiently large $n$,
\begin{align*}
&\limsup_{n\to\infty} n^{-1}\log \E\left[\sum_{t\in  [(1-\change{\nu_1}) n,n]}Z_2(\bfPhi(n),\epsilon,\lambda n,t) \right] 
\le (\log 2)(1-\alpha)+\delta, \\
&\limsup_{n\to\infty} n^{-1}\log \E\left[\sum_{t\in  [(1-\nu_2) n, (1-\nu_1) n]}Z_2(\bfPhi(n),\epsilon,\lambda n,t)\right] \le -(\log 2)/(40).
\end{align*}
\end{coro}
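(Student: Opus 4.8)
The plan is to assemble the corollary directly from the bound~(\ref{eq:sum-Z2}), the Taylor-expanded exponential rate $C(\alpha,s,K)$ computed just above it, and Lemma~\ref{lemma:rate-exponent}, the only real work being to track the order in which the constants are chosen and to check that the $o(1)$ error terms are uniform over the relevant window of $t$. First I would fix $0<\nu_1<2\nu_1<\nu_2<1/2$ exactly as furnished by Lemma~\ref{lemma:rate-exponent}, and let $\alpha\in(1/2+1/10,1)$ and $\delta>0$ be arbitrary. Taking $n^{-1}\log$ of~(\ref{eq:sum-Z2}) for an interval $[a_1,a_2]\subseteq[0,1]$ and using $n^{-1}\log n^3\to 0$ gives
\begin{align*}
n^{-1}\log\E\left[\sum_{a_1 n\le t\le a_2 n}Z_2(\bfPhi(n),\epsilon,\lambda n,t)\right]\le g(\epsilon)+h(\lambda)+\max_{a_1 n\le t\le a_2 n}n^{-1}\log z_2(n,t,K,\alpha)+o(1).
\end{align*}
By the computation preceding the corollary, $n^{-1}\log z_2(n,t,K,\alpha)=C(\alpha,t/n,K)+o(1)$; here the $o(1)$ comes from Stirling's estimate $\binom{n}{t}=\exp(nH(t/n)+O(\log n))$ and from the Taylor remainder in $\log(1+x)$ with $|x|=O(2^{-K})$ multiplied by $m=\Theta(2^K n)$, and both are uniform over $t/n$ in an interval bounded away from $0$. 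Hence the right-hand side above is at most $g(\epsilon)+h(\lambda)+\sup_{s\in[a_1,a_2]}C(\alpha,s,K)+o(1)$.

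For the first inequality I take $[a_1,a_2]=[1-\nu_2,1]$ and split $\sup_{s\in[1-\nu_2,1]}C(\alpha,s,K)$ into the two sub-intervals $[1-\nu_1,1]$ and $[1-\nu_2,1-\nu_1]$. By Lemma~\ref{lemma:rate-exponent}, as $K\to\infty$ the first supremum tends to $(\log 2)(1-\alpha)$ and the second is eventually bounded above by $(\log 2)(1/2-\alpha)<(\log 2)(1-\alpha)$, so $\limsup_{K\to\infty}\sup_{s\in[1-\nu_2,1]}C(\alpha,s,K)\le(\log 2)(1-\alpha)$. Thus there is $K_0(\alpha,\delta)$ with $\sup_{s\in[1-\nu_2,1]}C(\alpha,s,K)\le(\log 2)(1-\alpha)+\delta/2$ for all $K\ge K_0$. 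For each such $K$, since $\lim_{z\to 0}g(z)=\lim_{z\to 0}h(z)=0$, I choose $\bar\epsilon,\bar\lambda>0$ (depending on $K,\alpha,\delta$) so small that $g(\epsilon)+h(\lambda)\le\delta/2$ whenever $\epsilon\le\bar\epsilon$, $\lambda\le\bar\lambda$; letting $n\to\infty$ yields $\limsup_n n^{-1}\log\E[\cdots]\le(\log 2)(1-\alpha)+\delta$.

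For the second inequality I take $[a_1,a_2]=[1-\nu_2,1-\nu_1]$. By Lemma~\ref{lemma:rate-exponent}, $\limsup_{K\to\infty}\sup_{s\in[1-\nu_2,1-\nu_1]}C(\alpha,s,K)\le -(\log 2)/10$, so after possibly enlarging $K_0$ we have $\sup_{s\in[1-\nu_2,1-\nu_1]}C(\alpha,s,K)\le -(\log 2)/20$ for all $K\ge K_0$; then after possibly shrinking $\bar\epsilon,\bar\lambda$ I may assume in addition $g(\epsilon)+h(\lambda)\le(\log 2)/40$ for $\epsilon\le\bar\epsilon$, $\lambda\le\bar\lambda$. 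The displayed bound then gives $\limsup_n n^{-1}\log\E[\cdots]\le(\log 2)/40-(\log 2)/20=-(\log 2)/40$, as required. The step I expect to be the only genuinely delicate one is bookkeeping the quantifiers — $\nu_1,\nu_2$ are pinned down first, then $\alpha,\delta$ are arbitrary, then $K$ is taken above a threshold depending on $\alpha,\delta$, and only then are $\bar\epsilon,\bar\lambda$ chosen (allowed to depend on $K$, which matters since $g,h$ themselves depend on $\alpha,K$) — together with the verification that the $o(1)$ from Stirling and the Taylor expansion is uniform in $t$ over $[(1-\nu_2)n,n]$, which holds precisely because $t/n$ stays bounded away from $0$ there.
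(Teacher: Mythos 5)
Your proof is correct and follows essentially the same route as the paper: fix $\nu_1,\nu_2$ from Lemma~\ref{lemma:rate-exponent}, push the $n^{-1}\log$ through (\ref{eq:sum-Z2}), split the supremum over $[1-\nu_2,1]$ across the two sub-intervals handled by the lemma, pick $K_0$ so the rate exponents are below their target values, and then shrink $\bar\epsilon,\bar\lambda$ to absorb $g(\epsilon)+h(\lambda)$. Your bookkeeping of the final arithmetic, $(\log 2)/40-(\log 2)/20=-(\log 2)/40$, is the clean version of what the paper writes (which contains a small typo, $2(\log 2)/40$ where $(\log 2)/40$ is meant), and your explicit remark on uniformity of the $o(1)$ terms over $t/n\in[1-\nu_2,1]$ is a welcome addition.
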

We note that $\delta$ appears only in the first identity. The reasons for allowing $\delta$ to be arbitrarily small is to allow the bound to hold for any 
choice of $\alpha$ in the stated range. This will be necessary for the proof of Theorem~\ref{theorem:clustering} below.

\begin{proof}
We fix $\nu_1,\nu_2$ as in Lemma~\ref{lemma:rate-exponent}. Let  $\alpha\in (1/2+1/10,1)$ and let $\delta>0$ be arbitrary.
We find $K_0$ sufficiently large so that for all $K\ge K_0$,
\begin{align*}
& \sup_{s\in [1-\nu_1,1]}C(\alpha,s,K) = 
C(\alpha,1,K)
\le
(\log 2)(1-\alpha)+\delta/2,    \\
& \sup_{s\in [1-\nu_2,1-\nu_1]}C(\alpha,s,K)
\le -(\log 2)/20<0.
\end{align*}
For every $K\ge K_0$ we find $\bar\epsilon,\bar\lambda$ sufficiently small so that $g(\epsilon)+h(\lambda)<\min\left(\delta/2,\log(2)/40\right)$ for all 
$\epsilon\le\bar\epsilon, \lambda\le \bar \lambda$. 
The required bounds are obtained from (\ref{eq:sum-Z2})  and using the fact that $-(\log 2)/20+2(\log 2)/40=-(\log 2)/40$.
\end{proof}

\subsection{Proof of Theorem~\ref{theorem:clustering}}
We are now ready to prove Theorem~\ref{theorem:clustering}.

\begin{proof}[Proof of Theorem~\ref{theorem:clustering}]
We fix $\nu_1,\nu_2$ as in the setting of Corollary~\ref{coro:trim}. Similarly, we fix any $\alpha\in (1/2+1/5,1)$ and any sufficiently small $\delta>0$ such that
\begin{align}\label{eq:delta}
\delta \leq \frac{1}{7} (\log 2)(1-\alpha).
\end{align}
We find large enough $K$ and small enough $\bar\epsilon, \bar\lambda$ so that Corollary~\ref{coro:trim} applies. 
Furthermore, we assume that $K$ is large enough so that the bound in Corollary~\ref{coro:sat-lower-bound} applies as well. We let $\epsilon<\bar\epsilon$ and $\lambda<\bar\lambda$ be arbitrary.
By Markov's inequality applied to the second bound in Corollary~\ref{coro:trim}  and by lowering the constant to $-(\log 2)/50$ we obtain that for all large enough $n$:
\begin{align*}
\pr\left(\sum_{t\in  [(1-\nu_2) n, (1-\nu_1) n]} Z_2(\bfPhi(n),\epsilon,\lambda n,t) \ge 1\right) \le \exp\left(-n(\log 2)/50\right),
\end{align*}
and thus $Z_2(\bfPhi(n),\epsilon,\lambda n,t)=0$ for all $t\in  [(1-\nu_2) n, (1-\nu_1) n]$ w.h.p.\ as $n\to\infty$. This means
that w.h.p.\ as $n\to\infty$ for \change{every 
two subsets} $S_j, j=1,2$ such that $|S_j|\ge (1-\epsilon) n$ and every 
two solutions $x_j, j=1,2$
 such that $x_j$ violates at most $\lambda n$ clauses
out of the clauses $C(S_j)$---that is, clauses consisting entirely of variables in  $S_j$---it is the case that either $n^{-1}d(x_1,x_2)\le \nu_1$ or $n^{-1}d(x_1,x_2)\ge \nu_2$. 
In particular, the set  ${\rm SAT}(\bfPhi(n),\epsilon,\lambda n)$ exhibits the $(\nu_1,\nu_2)$-OGP.

Consider the implied unique $(\nu_1,\nu_2)$-clustering $CL_\ell, \ell\ge 1$ of this set. 
Note that  we trivially have that
\begin{align*}
\bigcup_\ell CL_\ell = {\rm SAT}(\bfPhi(n),\epsilon,\lambda n)\supset {\rm SAT}(\bfPhi(n,\alpha 2^K(\log 2)n)).
\end{align*} 
By Corollary~\ref{coro:sat-lower-bound} w.h.p.\ as $n\to\infty$:
\begin{align}\label{eq:sat-lower-bound-2}
|{\rm SAT}(\bfPhi(n),\epsilon,\lambda n) |
\ge 
|{\rm SAT}(\bfPhi(n,\alpha 2^K(\log 2)n))|\ge \exp\left( n (\log 2) (1-\alpha)-n\delta\right).
\end{align}
Note that
\begin{equation*}
\frac{1}{4}\left(\max_\ell |CL_\ell | \right)^2\leq\sum\limits_\ell\binom{|CL_\ell|}{2}=\sum_{t\in  [(1-\nu_1) n,n]} Z_2(\bfPhi(n),\epsilon,\lambda n,t).
\end{equation*}
By the first part of Corollary~\ref{coro:trim}, applying Markov's inequality we have that w.h.p.\ as $n\to\infty$:
\begin{align*}
\sum_{t\in  [(1-\nu_1) n,n]} Z_2(\bfPhi(n),\epsilon,\lambda n,t)  \le \exp\left( (\log 2)(1-\alpha)n+2\delta n \right).
\end{align*}
Then we obtain that w.h.p. as $n\to\infty$\ 
\begin{align*}
\max_\ell |CL_\ell|\le 2\exp\left(\frac12 (\log 2)(1-\alpha) n +\delta n\right) \le \exp\left(\frac12 (\log 2)(1-\alpha) n +2\delta n  
\right).
\end{align*}
We set $c_1=\frac12(\log 2)(1-\alpha)+2\delta$, $c_2=(\log 2)(1-\alpha)-\delta$ and the proof is complete by our choice of $\delta$ in (\ref{eq:delta}) leading to $0<c_1<c_2$.
\end{proof}

\section{Proof of Combinatorial NLTS}\label{section:main-result}

In this section we give the proof of  Theorem~\ref{theorem:MainResult}.

We first establish a simple upper tail bound on  $m-| C(S) |$ when $S\subset [n]$ satisfies $|S|\ge (1-\epsilon) n$. Recall that $C(S)$ is the set of clauses in an instance $\Phi$ such that each clause in $C(S)$ consists entirely of variables
$x_i$ with  $i\in S$.
\begin{prop}\label{prop:eta}
For every $\alpha\le 1$, $K\ge 1$, $\eta>0$ and $m=\alpha 2^K(\log 2) n$, there exists small enough $\epsilon>0$ so that w.h.p.\ as $n\to\infty$
the formula $\bfPhi(n,m)$ satisfies
$m-|C(S)|\le \eta n$ for all $S\subset [n], |S|=(1-\epsilon)n$.
\end{prop}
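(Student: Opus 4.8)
The plan is to recognize $m-|C(S)|$ as a binomial random variable and then combine a Chernoff bound with a union bound over all eligible $S$. Since $C(S)$ is monotone in $S$ (adding variables to $S$ can only add clauses to $C(S)$), the quantity $m-|C(S)|$ is nonincreasing in $S$, so it suffices to verify the bound for sets of size exactly $(1-\epsilon)n$; the case $|S|\ge(1-\epsilon)n$ then follows by restricting to a subset of that size.

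First I would fix $S\subset[n]$ with $|S|=(1-\epsilon)n$. For each clause $C_j$, its $K$ literals are drawn independently and uniformly from the $2n$ available literals, so the event $\{C_j\notin C(S)\}$, i.e.\ $C_j$ uses at least one variable with index outside $S$, has probability $p:=1-(|S|/n)^K=1-(1-\epsilon)^K\le K\epsilon$, independently across $j=1,\dots,m$. Hence $m-|C(S)|$ is distributed as $\mathrm{Bi}(m,p)$ with mean $mp\le \alpha 2^K(\log 2)K\epsilon\, n$. Choosing $\epsilon$ small enough makes this mean at most $(\eta/2)n$, so $\eta n$ is a genuine large deviation.

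Next I would apply the elementary binomial tail estimate
$\pr\!\left(\mathrm{Bi}(m,p)\ge \eta n\right)\le \binom{m}{\eta n}p^{\eta n}\le (emp/(\eta n))^{\eta n}\le \left(e\alpha 2^K(\log 2)K\epsilon/\eta\right)^{\eta n}$,
which is $\exp(-\Theta(n))$ with per-$n$ rate $\eta\log\!\big(\eta/(e\alpha 2^K(\log 2)K\epsilon)\big)$ that diverges to $+\infty$ as $\epsilon\to0$ with $K,\eta,\alpha$ fixed. A union bound over the at most $\binom{n}{\epsilon n}=\exp(H(\epsilon)n+o(n))$ choices of $S$ (with $H$ the binary entropy) then bounds the failure probability by $\exp\!\big((H(\epsilon)+\eta\log(e\alpha 2^K(\log 2)K\epsilon/\eta))n+o(n)\big)$.

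The only real point to check — and the main (though mild) obstacle — is that the exponents compete in the right direction: since $H(\epsilon)\to0$ while the Chernoff rate tends to $+\infty$ as $\epsilon\to0$, there is a small enough $\epsilon$ for which the total exponent is strictly negative, yielding the claim w.h.p. One must simply keep the quantifier order straight: $\alpha$, $K$, and $\eta$ are fixed first, and $\epsilon$ is chosen last, small enough to simultaneously force $mp\le(\eta/2)n$ and $H(\epsilon)<\eta\log(\eta/(e\alpha 2^K(\log 2)K\epsilon))$; shrinking $\epsilon$ only helps both conditions, so such an $\epsilon$ exists.
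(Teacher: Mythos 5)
Your proof is correct and follows the same overall structure as the paper's: fix a set $S$ with $|S|=(1-\epsilon)n$, bound the probability that $m-|C(S)|\ge\eta n$, then union bound over the $\binom{n}{\epsilon n}$ choices of $S$ and let $\epsilon\to 0$. The one real difference is the concentration inequality used for a fixed $S$. The paper treats $g=m-|C(S)|$ as a discrete $1$-Lipschitz function of the $m$ independent clause choices and applies the one-sided Azuma--Hoeffding (bounded-differences) inequality, yielding a per-$n$ rate $\left(\eta - 2^K(1-(1-\epsilon)^K)\right)^2/2^{K+1}$ that tends to the fixed positive constant $\eta^2/2^{K+1}$ as $\epsilon\to 0$. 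You instead observe that $m-|C(S)|$ is exactly $\mathrm{Bi}(m,p)$ with $p=1-(1-\epsilon)^K\le K\epsilon$ and use the elementary union-type tail estimate $\pr(\mathrm{Bi}(m,p)\ge k)\le\binom{m}{k}p^k\le(emp/k)^k$, whose rate $\eta\log\big(\eta/(e\alpha 2^K(\log 2)K\epsilon)\big)$ diverges to $+\infty$ as $\epsilon\to 0$. Your route is slightly more elementary (no martingale inequality needed) and gives a quantitatively stronger bound, though either easily beats the union-bound cost $H(\epsilon)n=o(n/\epsilon)$. Two cosmetic points: the estimate $\binom{m}{k}p^k$ requires $k\le m$, i.e.\ $\eta\le\alpha 2^K\log 2$, but for larger $\eta$ the claim is vacuous since $m-|C(S)|\le m<\eta n$ always; and the condition $mp\le(\eta/2)n$ you impose is not actually used in your tail bound (the estimate $(emp/(\eta n))^{\eta n}$ is valid regardless), so only the exponent comparison at the end is essential.
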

\begin{proof}
 Fix $S\subset [n]$. The probability
that a random clause belongs to $C(S)$ (namely, consists entirely of variables $x_i, i\in S$) is  $(|S|/n)^K$.
Then
\begin{align*}
\E[m-|C(S)|]=m\left(1- (|S|/n)^K\right).
\end{align*}

\change{Note that $|C(S)|$ satisfies the bounded difference property that changing the clause
choice from $C_j$ to a different clause $C_j'$ changes $|C(S)|$ by at most $1$}. Thus, applying \change{McDiarmid's inequality to bound the probability that $m-|C(S)|$ exceeds its expectation by $t$},
we obtain that for every $t\ge 0$
\begin{align*}
\pr\left( m-|C(S)|\ge \E[m-|C(S)|]+t\right)\le \exp\left(- \frac{2t^2}{m}\right).
\end{align*}
In particular,
\begin{align*}
\pr\left( m-|C(S)|\ge n\eta\right)\le \exp\left(-2 \frac{\left(n\eta - m\left(1- (|S|/n)^K\right)\right)^2}{m}\right).
\end{align*}
Assume that $\epsilon>0$ is small enough so that $\eta-2^K(1-(1-\epsilon)^K)>0$.
Assuming $S$ satisfies $|S|=(1-\epsilon) n$ for this choice of $\epsilon$ and using $m\le 2^K n$ we obtain:
\begin{align*}
\pr\left( m-|C(S)|\ge \eta\right) & \le \exp\left(-2 {\left(n\eta - m\left(1- (1-\epsilon)^K\right)\right)^2 \over m}\right) \\
&\le
 \exp\left(- {\left(n\eta - 2^K n\left(1- (1-\epsilon)^K\right)\right)^2 \over 2^{K-1}n}\right) \\
 &=\exp\left(- n{\left(\eta - 2^K \left(1- (1-\epsilon)^K\right)\right)^2 \over 2^{K-1}}\right).
\end{align*}
By the union bound,
\begin{align*}
\pr\left( \exists S\subset [n]: |S|= (1-\epsilon)n, m-|C(S)|\ge \eta\right) 
&\le
{n\choose \epsilon n}
 \exp\left(- n{\left(\eta - 2^K \left(1- (1-\epsilon)^K\right)\right)^2 \over 2^{K-1}}\right).
\end{align*}
We have that
\begin{align*}
{n\choose \epsilon n}\le (ne/(\epsilon n))^{\epsilon n}=\exp( n \epsilon \log(e/\epsilon)).
\end{align*}
As $\epsilon\to 0$ we have that $\epsilon \log(e/\epsilon)\to 0$.  Similarly, $1- (1-\epsilon)^K\to 0$ as $\epsilon\to 0$. The upper
bound is then $\exp(-\Theta(n))$, and the proof is complete.
\end{proof}

We fix for now an arbitrary instance of the formula $\Phi(n,m)$. We only assume that it is satisfiable: ${\rm SAT}(\Phi(n,m))\ne\emptyset$.
Fix $\epsilon>0$ and let $\eta=\eta(\Phi,S)$ be defined by:
\begin{align}\label{eq:eta}
\eta\triangleq {1\over n}\max_{S \subseteq [n]:|S|= (1-\epsilon) n} (m-|C(S)|).
\end{align}
Namely, $n\eta$ is the largest number of clauses across all choices of $S, |S|=(1-\epsilon)n$ containing at least one variable  in $S^c$.

We start with some preliminary derivations.
We introduce a (somewhat noncanonical) basis of $\C^{2^{\change{N}}}$. For any $r>0$ and any $\sigma \in \{0,1\}^r$, denote by $\bar\sigma\in \{0,1\}^r$
a bit string obtained by flipping elements of $\sigma$ to the opposite element (e.g., for $\sigma=(0,0,1,0)$, we have that $\bar\sigma=(1,1,0,1)$).
The set of elements of our basis are defined by:
\begin{align}
\left\{ 
\bigotimes_{i\in [n]} \left( {1\over \sqrt{2}} |\sigma_i \rangle \pm{1\over \sqrt{2}} |\bar \sigma_i\rangle \right) : \sigma_i \in \{0,1\}^{|D(i)|}, i \in [n] \label{eq:basis}
\right\},
\end{align}
where again $D(i)$ is the subset of qubits corresponding to the variable $x_i$. Here, to avoid ambiguity due to ${\bar{ \bar \sigma}}=\sigma$, we assume that $\sigma$ is smaller than $\bar\sigma$ in lexicographic order.
Note that the cat state $\ket{\rmCAT}=\frac{1}{\sqrt{2}}(\ket{0}^{\otimes \change{N}}+\ket{1}^{\otimes \change{N}})$ is an element of the basis (\ref{eq:basis}). For each $i$ the operator $(I-\ket{\rmCAT(i)}\bra{\rmCAT(i)})$
is diagonal in this basis, and for each element $|w\rangle $ of this basis we have that
\begin{align}\label{eq:w-cat}
\left(I-\ket{\rmCAT(i)}\bra{\rmCAT(i)}\right) |w\rangle =0
\end{align}
if the $i$-th element of the representation of $w$ in (\ref{eq:basis}) is exactly $\ket{\rmCAT(i)}$ and 
\begin{align}\label{eq:w-non-cat}
(I-\ket{\rmCAT(i)}\bra{\rmCAT(i)}) |w\rangle =|w\rangle
\end{align}
otherwise.

Let $\left\{|z\rangle\right\}, z\in \{0,1\}^{\change{N}}$ denote the computational basis. Each element of the basis (\ref{eq:basis}) expands in the computational basis
in a straightforward way as follows. For each element of the basis 
\begin{align*}
|w\rangle=\bigotimes_{i\in [n]} \left( {1\over \sqrt{2}} |\sigma_i \rangle \pm{1\over \sqrt{2}} |\bar \sigma_i\rangle \right)
\end{align*}
let $B(w)$ be the set of bit strings $z\in \{0,1\}^{\change{N}}$ such 
that for each subset $D(i), i\in [n]$ it holds that $z_{D(i)}=\sigma_i$ or $=\bar\sigma_i$. Then
\begin{align}
|w\rangle =  \left( {1\over \sqrt{2}} \right)^{n} \sum_{z\in B(w)}  (-1)^{R(w,z)} |z\rangle,   \label{eq:w-in-z}
\end{align}
where $R(w,z)$ is the number of $-1$-s appearing in the projection of $|w\rangle$ onto $|z\rangle$. 

We now characterize the $\epsilon$-near ground states in terms of the elements of this basis.

\begin{lemma}\label{lemma:basis-CAT}
Suppose $|\psi\rangle$ is an $\epsilon$-near  ground state for $H$ and $S\subset [n]$ is (any) associated subset. 
Define the (unnormalized) state $|\phi\rangle \triangleq Q^{-1}(\gamma) |\psi\rangle$.
Suppose $|w\rangle $ is an element of the basis such that for some $i\in S$, the $i$-th element of $|w\rangle $ in the representation (\ref{eq:basis}) is 
distinct from $\ket{\rmCAT(i)}$. Then $\langle \phi | w\rangle =0$. 
\end{lemma}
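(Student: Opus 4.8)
The plan is to strip the operators $Q_{C_j,\gamma}$ off $|\psi\rangle$ one clause at a time, exploiting the fact that the condition $H_i|\psi\rangle=0$ (available since $i\in S$) pins down the structure of $Q_{x_i,\gamma}^{-1}|\psi\rangle$ on the block $D(i)$ of qubits. First I would record three elementary facts about the $Q$-operators. (i)~Each $Q_{C_j,\gamma}=(1-\gamma)Q_{C_j}+\gamma I$ is Hermitian with eigenvalues in $\{1,\gamma\}$, hence positive definite and invertible, with a Hermitian inverse. (ii)~Distinct $Q_{C_j,\gamma}$ act on the disjoint qubit sets $\{(j,1),\dots,(j,K)\}$ and therefore commute. (iii)~Consequently $Q_{x_i,\gamma}=\bigotimes_{j:C_j\in C(x_i)}Q_{C_j,\gamma}$ is Hermitian and invertible, and one may factor $Q^{-1}(\gamma)=R_i\,Q_{x_i,\gamma}^{-1}$, where $R_i\triangleq\bigotimes_{j:C_j\notin C(x_i)}Q_{C_j,\gamma}^{-1}$ tensored with the identity on the remaining qubits. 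The key observation is that the active support of $R_i$ consists of qubits $(j,k)$ with $C_j\notin C(x_i)$; since any qubit in $D(i)$ lies in a clause containing $x_i$, the operator $R_i$ acts as the identity on the block $D(i)$.

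Next I would invoke the near-ground-state hypothesis. Since $i\in S$, we have $H_i|\psi\rangle=Q_{x_i,\gamma}^{-1}\bigl(I-\ket{\rmCAT(i)}\bra{\rmCAT(i)}\bigr)Q_{x_i,\gamma}^{-1}|\psi\rangle=0$; applying the invertible operator $Q_{x_i,\gamma}$ on the left yields $\bigl(I-\ket{\rmCAT(i)}\bra{\rmCAT(i)}\bigr)Q_{x_i,\gamma}^{-1}|\psi\rangle=0$. The range of $\ket{\rmCAT(i)}\bra{\rmCAT(i)}$ on $D(i)$ tensored with the identity---equivalently the kernel of $I-\ket{\rmCAT(i)}\bra{\rmCAT(i)}$---is exactly the set of vectors $\ket{\rmCAT(i)}\otimes|\eta\rangle$ with $|\eta\rangle$ supported on the complement of $D(i)$, so the previous identity forces $Q_{x_i,\gamma}^{-1}|\psi\rangle=\ket{\rmCAT(i)}\otimes|\xi_i\rangle$ for some (unnormalized) $|\xi_i\rangle$ on the complementary qubits. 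Applying $R_i$, which is the identity on $D(i)$, I conclude
\begin{align*}
|\phi\rangle=Q^{-1}(\gamma)|\psi\rangle=R_i\,Q_{x_i,\gamma}^{-1}|\psi\rangle=\ket{\rmCAT(i)}\otimes\bigl(R_i|\xi_i\rangle\bigr),
\end{align*}
i.e., $|\phi\rangle$ is a product state whose $D(i)$-factor equals $\ket{\rmCAT(i)}$.

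To finish, I would work in the basis (\ref{eq:basis}). For each fixed $i'$ the vectors $\tfrac{1}{\sqrt{2}}|\sigma_{i'}\rangle\pm\tfrac{1}{\sqrt{2}}|\bar\sigma_{i'}\rangle$, as the sign and the lexicographically smaller $\sigma_{i'}\in\{0,1\}^{|D(i')|}$ vary, form an orthonormal basis of $\C^{2^{|D(i')|}}$, and $\ket{\rmCAT(i)}$ is precisely the element at $i'=i$ with $\sigma_i=0\cdots0$ and the $+$ sign; hence every element of this basis distinct from $\ket{\rmCAT(i)}$ is orthogonal to it. Write $|w\rangle=|w_i\rangle\otimes|w_{-i}\rangle$, where $|w_i\rangle$ is the $D(i)$-factor of $|w\rangle$ in (\ref{eq:basis}) and $|w_{-i}\rangle$ collects the remaining factors, and write $|\phi\rangle=\ket{\rmCAT(i)}\otimes|\phi_i'\rangle$ with $|\phi_i'\rangle\triangleq R_i|\xi_i\rangle$ as above. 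By hypothesis $|w_i\rangle\neq\ket{\rmCAT(i)}$, so $\langle\rmCAT(i)|w_i\rangle=0$, and therefore
\begin{align*}
\langle\phi|w\rangle=\langle\rmCAT(i)|w_i\rangle\cdot\langle\phi_i'|w_{-i}\rangle=0,
\end{align*}
which is the claim.

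The only place I expect real work is the bookkeeping in the first two paragraphs: making rigorous that $Q^{-1}(\gamma)$ factors through $Q_{x_i,\gamma}^{-1}$ times an operator trivial on $D(i)$, so that the rigid ``$\ket{\rmCAT(i)}$ on $D(i)$'' structure forced by $H_i|\psi\rangle=0$ survives application of the remaining $Q_{C_j,\gamma}^{-1}$ factors; granting that, the conclusion is a one-line computation in the orthonormal basis (\ref{eq:basis}).
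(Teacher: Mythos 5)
Your proof is correct and takes essentially the same route as the paper's: both isolate the residual factor of $Q^{-1}(\gamma)$ outside $Q_{x_i,\gamma}^{-1}$ (your $R_i$, which acts trivially on $D(i)$), use the condition $H_i\ket{\psi}=0$ to force the $D(i)$-block of $Q_{x_i,\gamma}^{-1}\ket{\psi}$ to be $\ket{\rmCAT(i)}$, and conclude by orthogonality in the basis (\ref{eq:basis}). The only cosmetic difference is that you work directly from the vector equation $H_i\ket{\psi}=0$ and exhibit $\ket{\phi}$ as a product state, while the paper starts from $\langle\psi|H_i|\psi\rangle=0$ and expands $\ket{\tilde\psi_i}=\bigl(\bigotimes_{j\notin C(x_i)}Q_{C_j,\gamma}\bigr)\ket{\phi}$ in the basis to reach the identical conclusion.
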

Namely, the expansion of $|\phi\rangle$ in this basis consists only of elements $|w\rangle$ of the form 
$\bigotimes_{i\in [n]} \left( {1\over \sqrt{2}} |\sigma_i \rangle \pm{1\over \sqrt{2}} |\bar \sigma_i\rangle \right)$ where 
${1\over \sqrt{2}} |\sigma_i \rangle \pm{1\over \sqrt{2}} |\bar \sigma_i\rangle=\ket{\rmCAT(i)}$ for each $i\in S$.

\begin{proof}
For every $i\in S$ we have that
\begin{align*}
0&=\langle \psi | H_i |\psi\rangle \\
&=\langle \phi| Q(\gamma) H_i Q(\gamma) |\phi\rangle\\
& = \langle \phi| Q(\gamma) Q_{x_i,\gamma}^{-1}\left (I-\ket{\rmCAT(i)}\bra{\rmCAT(i)}\right) Q_{x_i,\gamma}^{-1} Q(\gamma) |\phi\rangle \\
&=\langle \phi| \left(\bigotimes_{j\notin C(x_i)}  Q_{C_j,\gamma}\right)\left (I-\ket{\rmCAT(i)}\bra{\rmCAT(i)}\right) \left(\bigotimes_{j\notin C(x_i)}  Q_{C_j,\gamma}\right)  |\phi\rangle,
\end{align*}
where $\bigotimes_{j\notin C(x_i)}  Q_{C_j,\gamma} $ is assumed to act as identity operator on qubits associated with clauses in $C(x_i)$,
namely qubits in the set $\{(j,k), j\in C(x_i), k\in [K]\}$. Let $|\tilde \psi_i\rangle \triangleq \bigotimes_{j\notin C(x_i)}  Q_{C_j,\gamma}  |\phi\rangle$. 
Then:
\begin{align*}
0=
\langle \tilde \psi_i| \left (I-\ket{\rmCAT(i)}\bra{\rmCAT(i)}\right)  |\tilde \psi_i\rangle.
\end{align*}

We expand $|\tilde\psi_i\rangle$ in the basis 
$|\tilde\psi_i\rangle =\sum_w \langle w | \tilde\psi_i \rangle |w\rangle$ previously introduced to obtain: 
\begin{align*}
0= \sum_{w_1,w_2}\langle \tilde \psi_i| w_1\rangle \langle  w_2 | \tilde\psi_i\rangle 
\langle w_1  |\left (I-\ket{\rmCAT(i)}\bra{\rmCAT(i)}\right)    |w_2\rangle.
\end{align*}
Applying (\ref{eq:w-cat}) and (\ref{eq:w-non-cat}) we obtain:
\begin{align*}
0=\sum_{w\in Y_i}|\langle |\tilde\psi_i | w\rangle|^2,
\end{align*}
where the sum is over the subset $Y_i$ of elements $|w\rangle$ of the basis such that its $i$-th element in the representation (\ref{eq:basis}) is distinct from $\ket{\rmCAT(i)}$.
We thus obtain that $\langle \tilde\psi_i | w\rangle=0$ for all $\ket{w}\in Y_i$ and thus the expansion of $\tilde \psi_i$ in this basis
contains only elements $|w\rangle$ with the $i$-the element equal to $\ket{\rmCAT(i)}$. 

Recall that $|\tilde \psi_i\rangle = \left(\bigotimes_{j\notin C(x_i)}  Q_{C_j,\gamma}\right) |\phi\rangle$. The operator $\bigotimes_{j\notin C(x_i)}  Q_{C_j,\gamma}$
 acts as the identity operator on qubits in $D(i)$. Then  
$\langle \phi | w\rangle =0$ if the $i$-th element of $|w\rangle$ is distinct from $\ket{\rmCAT(i)}$.
Since this applies to all $i\in S$ the proof is complete.
\end{proof}

Fix an $\epsilon$-near ground state $|\psi\rangle =Q(\gamma)|\phi\rangle$ and any associated subset $S\subset [n]$.
Let $W(S)$ be the subset of the basis consisting of $|w\rangle$ such that for each $i\in S$, 
the $i$-th element of $|w \rangle$ is $\ket{\rmCAT(i)}$ in our representation (\ref{eq:basis}). 
From (\ref{eq:eta}) we have that:
\begin{align}\label{eq:W(S)-upper}
|W(S)| \le 2^{n K\eta}
\end{align}
since $|S|\ge (1-\epsilon)n$ and each element of $D(i), i\notin S$ appears in at least one clause outside of $C(S)$.  
The implication of Lemma~\ref{lemma:basis-CAT} and (\ref{eq:w-in-z}) taken together is that $|\phi\rangle$ as in Lemma~\ref{lemma:basis-CAT} can be written as:
\begin{align*}
|\phi\rangle &= \sum_{|w\rangle \in W(S)} \braket{w | \phi} |w\rangle \\
&= \left( {1\over \sqrt{2}} \right)^{n}\sum_{|w\rangle \in W(S)} \braket{w | \phi} \sum_{z\in B(w)} (-1)^{R(w,z)} |z\rangle.
\end{align*} 
This implies that for every element of the computational basis $|z\rangle$ we have that:
\begin{align}
\langle \phi | z\rangle 
=
\left( {1\over \sqrt{2}} \right)^{n}\sum_{w: z\in B(w),|w\rangle \in W(S)} \braket{w | \phi}  (-1)^{R(w,z)}, \label{eq:phi-in-z}
\end{align}
with the understanding that the value is zero if the sum is empty. From this, using (\ref{eq:W(S)-upper}), and 
the Cauchy--Schwartz inequality $ | \langle \phi | w\rangle|\le \|\phi\|_2$, we obtain an upper bound for every $z$:
\begin{align}\label{eq:upper-weight}
|\langle \phi | z\rangle|
\le 
\left( {1\over \sqrt{2}} \right)^{n} \| \phi\|_2 |W(S)| \le \left( {1\over \sqrt{2}} \right)^{n} \| \phi\|_2  2^{K\eta n}.
\end{align}

Let $\bar S\subset \{0,1\}^{\change{N}}$ consist of bit strings $z\in \{0,1\}^{\change{N}}$ which are consistent on $S$, namely such that for each $i\in S$
the values of $z$ on $D(i)$ are all identically $0$ or identically $1$. We have that:
\begin{align}
|\bar S| = 2^{(1-\epsilon)n}|W(S)| \le 2^{n+K\eta n}. \label{eq:size-bar-S}
\end{align}

Any $z\in \bar S$  can be projected to a partial assignment of variables $x_i, i\in S$ in a natural way. 
We denote by $x_S(z)\in \{0,1\}^S$ this projection. Since every clause $C$ in $C(S)$ consists entirely of variables in $x_i, i\in S$, for every such
clause $C$ and for every $z \in \bar S$, this clause is either satisfied or violated by the projection $x_S(z)$, regardless of the values of 
the value of the variables $x_i, i\notin S$. Denote by $\ell(z)\ge 0$ the number of clauses in $C(S)$ violated by the projection $x_S(z)$. Note that even though $z\in \{0,1\}^{\change{N}}$
might not represent a consistent assignment of all variables $x_i$, for each clause $C_j$ it is still well defined whether $C_j$
is satisfied by $z$ or not.
We decompose $\bar S$ as $\bigcup_{r\ge 0}\bar S(r)$, where $\bar S(r)$ is the set of strings  $z\in \bar S$ with $\ell(z)=r$.

\begin{lemma}\label{lemma:bounds-in-l(z)}
The following bounds hold for every $r\ge 0$ and $z\in \bar S(r)$:
\begin{align}
\gamma^{r+\eta n} |\langle \phi | z\rangle| \label{eq:violation-weights}
\le
|\langle \psi | z\rangle|
 \le
  \gamma^{r} |\langle \phi | z\rangle|.
\end{align}
\end{lemma}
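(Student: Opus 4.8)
The plan is to reduce the statement entirely to the action of the diagonal operator $Q(\gamma)=\bigotimes_{j}Q_{C_j,\gamma}$ on computational basis vectors. First I would record that each factor $Q_{C_j,\gamma}=(1-\gamma)Q_{C_j}+\gamma I$ is diagonal in the computational basis, and that on a basis state $|z\rangle$ it acts as the identity when $z$ satisfies $C_j$ (since then $Q_{C_j}|z\rangle=|z\rangle$) and as multiplication by $\gamma$ when $z$ violates $C_j$ (since then $Q_{C_j}|z\rangle=0$). Consequently $Q(\gamma)|z\rangle=\gamma^{v(z)}|z\rangle$, where $v(z)$ denotes the total number of clauses among all $m$ that are violated by $z$ (this is well defined even when $z\in\{0,1\}^{Km}$ is not a consistent assignment, as noted in the text). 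Since $Q(\gamma)$ is Hermitian and $|\psi\rangle=Q(\gamma)|\phi\rangle$, this immediately yields $\langle\psi|z\rangle=\gamma^{v(z)}\langle\phi|z\rangle$, hence $|\langle\psi|z\rangle|=\gamma^{v(z)}|\langle\phi|z\rangle|$.

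The second step is to locate $v(z)$ inside the interval $[r,\,r+\eta n]$ for $z\in\bar S(r)$. I would split the $m$ clauses into those lying in $C(S)$ and those not. Any clause in $C(S)$ involves only variables indexed by $S$, so whether $z$ satisfies it depends only on the partial assignment $x_S(z)$; by the very definition of $\bar S(r)$ the number of such clauses violated by $z$ equals $\ell(z)=r$. The clauses outside $C(S)$ number $m-|C(S)|$, and by the definition (\ref{eq:eta}) of $\eta$ this is at most $\eta n$ — using here that $|C(S)|$ is monotone in $S$, so the case $|S|\ge(1-\epsilon)n$ is dominated by the $|S|=(1-\epsilon)n$ case appearing in the maximum. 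Each such clause contributes at most $1$ to $v(z)$, so altogether $r\le v(z)\le r+\eta n$.

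The final step is immediate: since $0<\gamma<1$, the map $k\mapsto\gamma^k$ is decreasing, so $\gamma^{r+\eta n}\le\gamma^{v(z)}\le\gamma^{r}$, and multiplying through by $|\langle\phi|z\rangle|$ gives the asserted chain of inequalities. The argument is essentially bookkeeping; the only point demanding a little care is the second step — cleanly decomposing $v(z)$ into its $C(S)$-part (which is exactly $r$ because violation of a $C(S)$-clause is a function of $x_S(z)$ alone) and its complementary part (which is capped by $\eta n$ via (\ref{eq:eta})), and making sure the passage from $|S|\ge(1-\epsilon)n$ to equality is justified.
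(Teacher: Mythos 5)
Your proposal is correct and takes essentially the same route as the paper: both observe that $Q(\gamma)$ acts diagonally as $\gamma^{v(z)}$ on computational basis states and then sandwich $v(z)$ in $[r,\,r+\eta n]$ by splitting clauses into $C(S)$ and its complement; your write-up is just slightly more explicit about why $m-|C(S)|\le\eta n$ extends from $|S|=(1-\epsilon)n$ to $|S|\ge(1-\epsilon)n$ via monotonicity of $|C(\cdot)|$.
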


\begin{proof}
$Q(\gamma)$ is diagonal in the computational basis. For every $z\in \bar S$ we have that $Q(\gamma) | z\rangle =\gamma^{v} |z\rangle$, 
where $v$ is the total number of clauses in $\{C_j, j\in [m]\}$ violated by $z$\change{, including here clauses that contain classical variables $x_i$ for which $z$ is inconsistent}. Thus:
\begin{align*}
|\langle \psi | z\rangle| = |\langle \phi |Q(\gamma) | z\rangle|=\gamma^v |\langle \phi | z\rangle|.
\end{align*}
Note that when $z\in \bar S(r)$ we have by (\ref{eq:eta}) that  $r\le v\le r+\eta n$. 
Then (\ref{eq:violation-weights}) follows.
\end{proof}

For every $r\ge 0$ and every $z\in \bar S(r)$, by applying Lemma~\ref{lemma:bounds-in-l(z)} to (\ref{eq:upper-weight}) we obtain:
\begin{align}
|\langle \psi | z\rangle|
\le \gamma^r\left( {1\over \sqrt{2}} \right)^{n} \| \phi\|_2  2^{K\eta n}. \label{eq:z-in-S(r)}
\end{align}

We now obtain a complementary lower bound. Fix any two bit strings $z_1,z_2\in \bar S$ which agree on the set of qubits $\bigcup_{i\notin S}D(i)$.
Note then that for every $|w\rangle\in W(S)$, $z_1\in B(w)$ iff $z_2\in B(w)$.
Furthermore,
we claim that for every $|w\rangle \in W(S)$
\begin{align*} 
R(w,z_1)=R(w,z_2).
\end{align*}
Indeed, all minuses contributing to $R(w,z)$ arise from minuses in 
$w=\bigotimes_{i\in [n]} \left( {1\over \sqrt{2}} |\sigma_i \rangle \pm{1\over \sqrt{2}} |\bar \sigma_i\rangle \right)$ associated with $i\notin S$
by the definition of $W(S)$, and those are identical for $z_1$ and $z_2$ since they are identical outside of $\bigcup_{i\in S}D(i)$.
We then obtain from (\ref{eq:phi-in-z}) that $\langle \phi | z_1\rangle=\langle \phi | z_2\rangle$.

 Let $z^*$ be any maximizer
$z^*=\argmax_{z\in \{0,1\}^{\change{N}}} |\langle\phi|z\rangle|$. The maximization can be restricted to $z\in \bar S$ 
since by (\ref{eq:phi-in-z}) it has to be the case that $z\in B(w)$ for some $w\in W(S)$.
We thus trivially have from (\ref{eq:size-bar-S}) that:
\begin{align*}
 |\langle\phi|z^\ast\rangle|^2\ge 2^{-(1+K\eta)n}\|\phi\|_2^2. 
 \end{align*}
 
Recall that $\bar S(0)\subset \bar S$ consists of bit strings $z$ such that their projection $x_S(z)$ satisfies all clauses $C_j\in C(S)$. 
Namely, it is the set of bit strings in $\bar S$ such that $\ell(z)=0$.
For every $z\in \bar S(0)$, let $z^{z^*}$ be obtained from $z$ by changing coordinates of $z$ in $\bigcup_{i\notin S}D(i)$ to the one of $z^*$
and leaving them intact otherwise. By the above, $\langle \phi | z^{z^*}\rangle=\langle \phi | z^* \rangle$.
Note that for any two strings $z_1,z_2 \in \bar S(0)$ which differ on $\bigcup_{i\in S}D(i)$ we have that $z^{z^*}_1\ne z^{z^*}_2$.
We therefore obtain:
\begin{align*}
\sum_{z\in \bar S(0)} | \langle \phi | z^{z^*}\rangle |^2 \ge |\bar S(0)|2^{-(1+K\eta)n}\|\phi\|_2^2.
\end{align*}
Applying the lower bound part of Lemma~\ref{lemma:bounds-in-l(z)}
we obtain the lower bound:
\begin{align*}
\sum_{z\in \bar S(0)} | \langle \psi | z^{z^*}\rangle |^2 \ge |\bar S(0)|\gamma^{2\eta n}
2^{-(1+K\eta)n}\|\phi\|_2^2.
\end{align*}
Combining this as an upper bound on $\|\phi\|_2^2$ 
with (\ref{eq:z-in-S(r)}) we obtain the following estimate on measurement probabilities  $| \langle \psi | z\rangle |^2$
for $z\in \bar S(r)$:
\begin{align*}
| \langle \psi | z\rangle |^2
&\le 
{ \gamma^{2r}\left( {1\over 2} \right)^{n}  2^{2K\eta n} \over |\bar S(0)|\gamma^{2\eta n} 2^{-(1+K\eta)n}}
\sum_{z\in \bar S(0)} | \langle \psi | z^{z^*}\rangle |^2  \\
&=
{ \gamma^{2r-2\eta n}  2^{3K\eta  n} \over |\bar S(0)|}
\sum_{z\in \bar S(0)} | \langle \psi | z^{z^*}\rangle |^2 \\
&\le 
{ \gamma^{2r-2\eta n}  2^{3K\eta  n} \over |\bar S(0)|} \\
&\le 
{ \gamma^{2r}  \left({2 \over \gamma}\right)^{3K \eta n} \over |\bar S(0)|},
\end{align*}
where the third bound uses $\|\psi\|_2=1$. This an important finding which we summarize as follows.

\begin{theorem}\label{theorem:probability-bounds}
For every $r\ge 0$ and $z\in \bar S(r)$ 
\begin{align*}
| \langle \psi | z\rangle |^2 \le { \gamma^{2r}  \left({2 \over \gamma}\right)^{3K\eta n} \over |\bar S(0)|}.
\end{align*}
\end{theorem}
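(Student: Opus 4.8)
The plan is to control $|\langle\psi|z\rangle|^2$ by passing to the auxiliary unnormalized vector $|\phi\rangle=Q^{-1}(\gamma)|\psi\rangle$, which by Lemma~\ref{lemma:basis-CAT} is supported, in the basis (\ref{eq:basis}), only on the set $W(S)$ of elements whose $S$-blocks are all cat states. Two features make this useful: $Q(\gamma)$ is diagonal in the computational basis, multiplying $|z\rangle$ by $\gamma^{v}$ with $v$ the number of clauses violated by $z$, and for $z\in\bar S(r)$ one has $r\le v\le r+\eta n$ by the definition (\ref{eq:eta}) of $\eta$; this is precisely Lemma~\ref{lemma:bounds-in-l(z)}. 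So the task reduces to (i) an upper bound on the amplitudes $|\langle\phi|z\rangle|$ in terms of $\|\phi\|_2$, and (ii) a matching lower bound on $\|\phi\|_2^2$ obtained by spreading the total norm over the many strings in $\bar S(0)$.

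For (i), expand each $|w\rangle\in W(S)$ in the computational basis via (\ref{eq:w-in-z}): every contributing $|z\rangle$ carries coefficient $\pm(1/\sqrt2)^n$, so (\ref{eq:phi-in-z}) writes $\langle\phi|z\rangle$ as $(1/\sqrt2)^n$ times a signed sum of at most $|W(S)|\le2^{K\eta n}$ numbers $\langle w|\phi\rangle$, where the bound on $|W(S)|$ follows from (\ref{eq:eta}) because every block $D(i)$ with $i\notin S$ meets a clause outside $C(S)$. Cauchy--Schwarz then yields $|\langle\phi|z\rangle|\le(1/\sqrt2)^n\|\phi\|_2\,2^{K\eta n}$, and multiplying by $\gamma^r\le\gamma^v$ gives (\ref{eq:z-in-S(r)}), namely $|\langle\psi|z\rangle|\le\gamma^r(1/\sqrt2)^n\|\phi\|_2\,2^{K\eta n}$ for every $z\in\bar S(r)$.

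Step (ii) is where I expect the real work, since it rests on a symmetry of $|\phi\rangle$ that must be checked with care: $\langle\phi|z\rangle$ depends only on the restriction of $z$ to $\bigcup_{i\in S}D(i)$. Indeed, if $z_1,z_2\in\bar S$ agree off that set then $z_1\in B(w)\Leftrightarrow z_2\in B(w)$ for $w\in W(S)$, and since the sign $(-1)^{R(w,z)}$ in (\ref{eq:w-in-z}) is determined entirely by the blocks $i\notin S$ of $w$, one gets $R(w,z_1)=R(w,z_2)$ and hence $\langle\phi|z_1\rangle=\langle\phi|z_2\rangle$. Now take a maximizer $z^\ast=\argmax_z|\langle\phi|z\rangle|$, which may be assumed to lie in $\bar S$ because $|\phi\rangle$ is supported on $\bigcup_{w\in W(S)}B(w)$; then by (\ref{eq:size-bar-S}) one has $|\langle\phi|z^\ast\rangle|^2\ge 2^{-(1+K\eta)n}\|\phi\|_2^2$. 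For each $z\in\bar S(0)$ let $z^{z^\ast}$ agree with $z$ on $\bigcup_{i\in S}D(i)$ and with $z^\ast$ elsewhere; by the symmetry $\langle\phi|z^{z^\ast}\rangle=\langle\phi|z^\ast\rangle$, strings with distinct $S$-blocks give distinct $z^{z^\ast}$, and each $z^{z^\ast}$ still satisfies every clause of $C(S)$ (its $S$-projection equals that of $z$), hence violates at most $\eta n$ clauses in total. Summing over $z\in\bar S(0)$ and applying the lower half of Lemma~\ref{lemma:bounds-in-l(z)} with $r=0$ gives $\sum_{z\in\bar S(0)}|\langle\psi|z^{z^\ast}\rangle|^2\ge|\bar S(0)|\,\gamma^{2\eta n}2^{-(1+K\eta)n}\|\phi\|_2^2$; since $\|\psi\|_2=1$ the left-hand side is at most $1$, so $\|\phi\|_2^2\le 2^{(1+K\eta)n}\gamma^{-2\eta n}/|\bar S(0)|$.

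Finally, substituting this into (\ref{eq:z-in-S(r)}) gives $|\langle\psi|z\rangle|^2\le\gamma^{2r}(1/2)^n 2^{2K\eta n}\cdot 2^{(1+K\eta)n}\gamma^{-2\eta n}/|\bar S(0)|=\gamma^{2r-2\eta n}2^{3K\eta n}/|\bar S(0)|$, and since $0<\gamma<1$ and $K\ge1$ one has $\gamma^{-2\eta n}2^{3K\eta n}\le\gamma^{-3K\eta n}2^{3K\eta n}=(2/\gamma)^{3K\eta n}$, which is exactly the asserted bound $\gamma^{2r}(2/\gamma)^{3K\eta n}/|\bar S(0)|$. The only delicate point throughout is keeping the bookkeeping of the $\eta n$ "slack" clauses and qubits straight: they contribute a factor $2^{K\eta n}$ to the amplitude bound (via $|W(S)|$), a factor $\gamma^{\pm\eta n}$ when translating between $\psi$ and $\phi$, and a factor $2^{K\eta n}$ comparing $|\bar S|$ with $2^n$; one must also verify the sign-invariance $R(w,z_1)=R(w,z_2)$ that legitimizes the $\bar S(0)$-averaging, and choose $\gamma$ small enough that these $2^{O(K\eta n)}$ losses are absorbed into the $(2/\gamma)^{3K\eta n}$ factor.
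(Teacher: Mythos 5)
Your proof follows the paper's own argument step by step: you restrict $|\phi\rangle=Q^{-1}(\gamma)|\psi\rangle$ to the span of $W(S)$ via Lemma~\ref{lemma:basis-CAT}, derive the amplitude upper bound $|\langle\phi|z\rangle|\le(1/\sqrt2)^n\|\phi\|_2\,2^{K\eta n}$ by Cauchy--Schwarz, obtain a matching lower bound on $\|\phi\|_2^2$ via the averaging trick over the auxiliary strings $z^{z^\ast}$ for $z\in\bar S(0)$, and pass between $|\psi\rangle$ and $|\phi\rangle$ amplitudes with Lemma~\ref{lemma:bounds-in-l(z)}, arriving at the stated $\gamma^{2r}(2/\gamma)^{3K\eta n}/|\bar S(0)|$ bound exactly as in the paper. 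One minor slip: you assert that $\langle\phi|z\rangle$ depends only on the restriction of $z$ to $\bigcup_{i\in S}D(i)$, whereas the symmetry you actually establish (and correctly use) is the complementary one, namely that for $z\in\bar S$ the amplitude $\langle\phi|z\rangle$ is determined by the restriction of $z$ to $\bigcup_{i\notin S}D(i)$.
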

Loosely speaking, the bound in the theorem states that the probability weight associated with strings $z$ violating $r$ clauses is
at most roughly $\gamma^{2r}/|\bar S(0)|$. We will argue that the cardinality of $\bar S(0)$ is exponentially large with a controlled exponent.
By making $r$ linear in $n$, then, the contribution of these $r$-violating bit strings will be negligible, and thus most of the probability mass of $\lvert\langle z|\psi\rangle\rvert^2$
is spread over the nearly satisfying solutions. Furthermore, the probability weight associated with each individual string in $\bar S(0)$ is also 
roughly at most $1/|\bar S(0)|$. Namely, $\lvert\langle z|\psi\rangle\rvert^2$ is roughly uniform over nearly satisfying solutions.

From our robust OGP, the distribution of $\ket{\psi}$ over nearly satisfying solution is thus well-spread in the sense that it has large mass over clusters that are far apart. To translate this into a lower bound on circuit depth we will use the Lemma below, variants of which can be found throughout the literature~\cite{eldar2017local,moosavian2022limits,anshu2022nlts}.
\begin{lemma}[Fact 4 of \cite{anshu2022nlts}] \label{lem:circuit_lower_bound}
    Let $p$ be a distribution on $n$ bits generated by measuring the output of a quantum circuit acting on the initial state $\ket{0}^{\otimes n}$. If there exist sets $S_1,S_2 \subseteq \{0,1\}^n$ where the Hamming distance between elements of $S_1$ and $S_2$ is at least $d(S_1, S_2)$ and $p(S_1) \geq \mu$ and $p(S_2) \geq \mu$, then the depth $T$ of the circuit must be at least
    \begin{equation}
        T \geq \frac{1}{3}\log \left( \frac{d(S_1, S_2)^2}{400n \log(1/\mu) } \right).
    \end{equation}
\end{lemma}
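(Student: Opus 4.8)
The plan is to reduce the claim to a concentration-of-measure estimate for the output distribution of a shallow circuit, and then read off the depth bound by rearranging. The statistic to study is the Hamming distance to the first set, $h(z)\triangleq\min_{y\in S_1}d(z,y)$, which is $1$-Lipschitz on $\{0,1\}^n$ with respect to the Hamming metric. The hypotheses force $h$ to be ``bimodal'' under $p$: since $S_1\subseteq\{z:h(z)=0\}$ we have $\Pr_{Z\sim p}[h(Z)=0]\ge p(S_1)\ge\mu$, while every $z\in S_2$ obeys $h(z)\ge d(S_1,S_2)$, so $\Pr_{Z\sim p}[h(Z)\ge d(S_1,S_2)]\ge p(S_2)\ge\mu$. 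Writing $c=\mathbb{E}_{Z\sim p}[h(Z)]\ge 0$ and splitting on whether $c\le d(S_1,S_2)/2$, at least one of those two events is contained in $\{\,|h-c|\ge d(S_1,S_2)/2\,\}$, so $\Pr_{Z\sim p}[\,|h(Z)-\mathbb{E}[h]|\ge d(S_1,S_2)/2\,]\ge\mu$.

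The second ingredient is the standard fact that $1$-Lipschitz functions of the output of a shallow circuit concentrate. First I would record the light-cone structure of a depth-$T$ circuit in the sense of Definition~\ref{definition:T-trivial}: each output qubit depends on at most $2^T$ input qubits; any two blocks of output qubits whose backward light cones are disjoint have a product reduced density matrix, hence independent measurement outcomes; and the conflict graph on output qubits---$i$ adjacent to $j$ iff their light cones meet---has maximum degree at most $2^{2T}$. Using this, a Doob-martingale argument that reveals the output bits one at a time, bounding the increment at each step by (essentially) the light-cone conflict degree $2^{O(T)}$ rather than by $1$, should yield a sub-Gaussian tail
\begin{align*}
\Pr_{Z\sim p}\bigl[\,|f(Z)-\mathbb{E}[f(Z)]|\ge t\,\bigr]\le 2\exp\!\left(-\frac{t^2}{C\,n\,2^{O(T)}}\right)
\end{align*}
for every $1$-Lipschitz $f$ and an absolute constant $C$. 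This is exactly the estimate underlying the cited Fact, and variants are worked out in \cite{eldar2017local,anshu2022nlts}.

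Combining the two, I would apply the concentration bound to $h$ with $t=d(S_1,S_2)/2$ and compare with the lower bound $\mu$ from the first paragraph: $\mu\le 2\exp(-d(S_1,S_2)^2/(4C\,n\,2^{O(T)}))$, hence $d(S_1,S_2)^2\le 4C\,n\,2^{O(T)}\log(2/\mu)$. Rearranging for $T$ and absorbing the absolute constants---one is free to take the $O(T)$ in the light-cone bound generously, so the exponent can be normalized to $2^{3T}$, and $\log(2/\mu)\le 2\log(1/\mu)$ for $\mu\le 1/2$; when $\mu>1/2$ the sets $S_1,S_2$ must intersect, so $d(S_1,S_2)=0$ and the claimed inequality is vacuous---yields $T\ge\frac13\log\bigl(d(S_1,S_2)^2/(400\,n\log(1/\mu))\bigr)$.

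The step I expect to be the main obstacle is the concentration estimate of the second paragraph. A naive martingale that reveals one output bit at a time gives only a depth-independent variance proxy of order $n$, and even that needs care because $p$ is not a product distribution; the improvement to $n\cdot 2^{O(T)}$ must come from the fact that, for a shallow circuit, conditioning the measurement outcome on far-away bits barely perturbs the local marginals. Turning this into a clean pointwise bound on the martingale increments---ordering the revealed coordinates so that the bounded-difference constants multiply out to $n\cdot 2^{O(T)}$ rather than $n^2$, via the product structure of light-cone-disjoint reduced states---is the delicate quantum conditional-independence manipulation; for the purposes of this paper it is cleanest to quote it, as the statement does, from \cite{anshu2022nlts} or \cite{eldar2017local}.
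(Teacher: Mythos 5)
The paper does not prove this lemma at all: it is stated as ``Fact 4 of \cite{anshu2022nlts}'' and used as a black box, so there is no internal proof for your attempt to be compared against. What you are doing, in effect, is sketching how that black box would be established, and your reduction is indeed the standard one. The first paragraph --- defining the $1$-Lipschitz function $h(z)=\min_{y\in S_1}d(z,y)$, observing that $h\equiv 0$ on $S_1$ and $h\ge d(S_1,S_2)$ on $S_2$, and splitting on whether $\E[h]\le d(S_1,S_2)/2$ to conclude $\Pr\left[|h-\E h|\ge d(S_1,S_2)/2\right]\ge\mu$ --- is correct and is precisely how one derives Fact 4 from a Lipschitz concentration bound. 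The constant juggling in your last paragraph (normalizing $2^{O(T)}$ to $2^{3T}$, replacing $\log(2/\mu)$ by $2\log(1/\mu)$, dismissing $\mu>1/2$ as vacuous) is loose but the kind of thing that can be made to work with a small case analysis on the size of $d(S_1,S_2)^2/(n\log(1/\mu))$; the constant $400$ in the statement is there to absorb exactly this slack.

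The one place where I would push back is the second paragraph, where you sketch a Doob martingale that reveals output bits one at a time and bounds each increment by the light-cone conflict degree $2^{O(T)}$. As you yourself flag, this does not work as stated: conditioning on already-revealed bits can create correlations between not-yet-revealed bits whose light cones are disjoint, so the unconditional product structure $\rho_{\{i,j\}}=\rho_i\otimes\rho_j$ does not give a per-step increment bound once one is deep into the filtration. The concentration lemmas that actually appear in the literature (e.g.\ the Gaussian concentration for shallow-circuit outputs with variance proxy $O(n\,4^T)$) are proved by a different route than a naive martingale over output bits. Since you explicitly choose to quote this estimate from \cite{anshu2022nlts} or \cite{eldar2017local} rather than reprove it, the gap does not affect the validity of the overall argument --- but the prose should not suggest that a bit-by-bit martingale with a $2^{O(T)}$ increment bound is the mechanism, because that particular mechanism is broken.
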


\subsection{Proof of Theorem~\ref{theorem:MainResult}}

\begin{proof}[Proof of Theorem~\ref{theorem:MainResult}]
We assume the setting of Theorem~\ref{theorem:clustering}. Our formula $\Phi$ is $\bfPhi(n,m)$. 
The parameters $\alpha,K$, etc.\ are as in the theorem. 

We select $\epsilon \le \bar\epsilon, \lambda\le \bar\lambda$
and $\gamma$ as follows. We fix any $0<\lambda\le \bar \lambda$. We select $\gamma$ small enough so that
so that $\gamma^{2\lambda}  <1/8$. We select $\eta>0$ small enough so that
\begin{align}
\left({2 \over \gamma}\right)^{4K\eta } 
&\le 
2^{{c_2-c_1 \over 2}}  \label{eq:parameters1} \\
\gamma^{2\lambda -3K\eta } & <1/8. \label{eq:parameters2} \\
4K\eta & <1.  \label{eq:parameters3}
\end{align}
We finally select $\epsilon<\bar \epsilon$ also small enough with respect to $K$ and $\eta$ such that Proposition~\ref{prop:eta} applies.

Assume that the high probability event underlying Theorem~\ref{theorem:clustering} and Proposition~\ref{prop:eta} takes place for this choice of parameters.
Denote by $CL_\ell$ the associated $(\nu_1,\nu_2)$-clustering of the set ${\rm SAT}(\bfPhi(n, \epsilon,\lambda n))$. 
In particular, $|CL_\ell |\le 2^{c_1 n}$ for all $\ell$.
Fix an $\epsilon$-near ground state
$|\psi\rangle$ and any associated subset $S\subset [n], |S|\ge (1-\epsilon)n$.
 In particular, $\bar S(0)\ne\emptyset$, and furthermore by the conclusion of Theorem~\ref{theorem:clustering}:
\begin{align}
|\bar S(0)|\ge 2^{c_2 n}. \label{eq:barSzeroLower}
\end{align}
We claim the following relations: 
\begin{align*}
\sum_{r> \lambda n}\sum_{z\in \bar S(r)}| \langle \psi | z\rangle |^2 
&\le 
O(n) { \gamma^{2\lambda n}  \left({2 \over \gamma}\right)^{3K\eta n} \over |\bar S(0)|} |\bar S|\\
&\le
O(n) \gamma^{2\lambda n}  \left({2 \over \gamma}\right)^{3K\eta n} 2^{n+K\eta n} \\
&\le \gamma^{2\lambda n-3K\eta n} 4^n \\
&\le (1/2)^n.
\end{align*}
The first inequality follows from Theorem~\ref{theorem:probability-bounds} and the fact that the total number of clauses and thus terms in 
the sum $r\ge \lambda n$ is $O(n)$.
The second inequality follows  from  (\ref{eq:size-bar-S}) and $|\bar S(0)|\ge 1$.
The  third inequality follows from (\ref{eq:parameters3}) ,and the last inequality follows from (\ref{eq:parameters2}).
Thus, the overall probability measure associated with strings
$z\in \bar S(r), r>\lambda n$ is exponentially small (at most $(1/2)^n$).

Consider now strings $z\in \bigcup_{r\le \lambda n} \bar S(r)$. The projection $x_S(z)$ to the assignment of variables in $S$ 
corresponds to the partial assignment $x: S\to \{0,1\}$ with the 
 property that it violates at most $\lambda n$ clauses out of $|C(S)|$ clauses consisting of variables in $S$.  Thus there exists an extension $\tilde x$ of
$x$ onto an assignment of variables in $[n]\setminus S$ such that the extension $\tilde x$
 belongs to one of the clusters $CL_\ell$. For any
such cluster $CL_\ell$, let $BCL_\ell$ be the set of strings $z \in \bigcup_{r\le \lambda n} \bar S(r)$ with this property. 
Namely, for every $z\in BCL_\ell$, there is an extension $\tilde x$ of $x_S(z)$ such that $\tilde x\in CL_\ell$. We now obtain an upper bound
on the cardinality of $BCL_\ell$. For any fixed assignment $x:S\to\{0,1\}$, there exists at most $2^{K\eta n}$ strings
$z$ such that $x_S(z)$ coincided with this assignment. These strings  correspond to the at most $K\eta n$ qubits in $\bigcup_{i\notin S}D(i)$.
Thus:
\begin{align*}
|BCL_\ell |\le 2^{K\eta n} |CL_\ell |\le 2^{K\eta n} 2^{c_1 n}.
\end{align*}
Applying Theorem~\ref{theorem:probability-bounds} we obtain:
\begin{align*}
\sum_{z\in BCL_\ell} | \langle \psi | z\rangle |^2
&\le 
2^{K\eta n} |CL_\ell | {\left({2 \over \gamma}\right)^{3K\eta n} \over |\bar S(0)|} \\
&\le
\left({2 \over \gamma}\right)^{4K\eta n}{ |CL_\ell |  \over |\bar S(0)|}  \\
&\le 
\left({2 \over \gamma}\right)^{4K\eta n}2^{-(c_2-c_1)n} \\
&\le 
2^{-{c_2-c_1\over 2}n},
\end{align*}
where the second-to-last inequality uses (\ref{eq:barSzeroLower}) and the last inequality uses (\ref{eq:parameters1}).

To translate the above into a circuit lower bound we apply Lemma \ref{lem:circuit_lower_bound}. Let $p(z) = \lvert\braket{\psi | z }\rvert^2$ be the distribution of bit strings resulting from measuring $\ket{\psi}$ in the computational basis. From the above, we have that the probability of measuring a bit string violating at most $\lambda n$ clauses is at least
\begin{equation}
    p\left( \bigcup_{r\le \lambda n} \bar S(r) \right)   \geq 1 - (1/2)^n, 
\end{equation}
which is greater than $0.95$ for large enough $n$. Each of the bit strings in $\bigcup_{r\le \lambda n} \bar S(r)$ is an element of a cluster $BCL_\ell$. Furthermore, we have that $p(BCL_\ell) \leq 2^{-{c_2-c_1\over 2}n}$. Therefore, for large enough $n$, we partition \change{the union of all clusters $\cup_{\ell} BCL_\ell$ into disjoint sets $S_1$ and $S_2$} such that each has size $p(S_1), p(S_2) \geq 0.45= \mu$. From the $(\nu_1,\nu_2)$-OGP, elements of $S_1$ and $S_2$ must be at least distance $d(S_1,S_2) \geq \nu_2 n$ apart as $S_1$ and $S_2$ contain disjoint clusters. Applying Lemma \ref{lem:circuit_lower_bound} we finally have that:
\begin{equation}
        T \geq \frac{1}{3}\log \left( \frac{\nu_2^2 n}{400 \log(1/0.45) } \right) = \Omega(\log(n)).
\end{equation}

\end{proof}

\section{Local Hamiltonians with bounded degree}\label{sec:bounded_degree}
One arguable limitation of our result is that in our model the degree (number of local Hamiltonians per qubit) remains bounded only on average,
while at maximum can be as large as the largest degree in our random K-SAT model ($O(\log n/\log\log n)$). We now show 
that this is fixable by switching to a different model; namely, the $p$-spin Ising model on a sparse random regular graph. We will not go into full proof 
of the result and just lay out key ideas for brevity. 

We first describe the model on a general $p$-uniform hypergraph. First recall that a $p$-uniform hypergraph
on the set of nodes $[n]$ is a collection $e_1,\ldots,e_m$ of cardinality $p$ subsets of $[n]$ called hyperedges. The case where $p=2$ corresponds
to an undirected graph in the usual sense. Suppose real values $J_e$---which we call couplings---are associated with all hyperedges $e_1,\ldots,e_m$.
We consider a (classical) Hamiltonian $H$ which associates an energy $H(\sigma)=\sum_e J_e\prod_{1\le i\le p}\sigma(e(i))$ 
with each spin assignment $\sigma:[n]\to \{\pm 1\}$. Here, $e(i)\in [n]$ denotes the $i$-node in hyperedge $e$ (the ordering is irrelevant since a product is taken). 

We now consider this model on a graph $\G_{n,d,p}$ which is a $d$-regular $p$-uniform hypergraph generated uniformly at random
from all such graphs. We call a hypergraph $d$-regular if every node is included in precisely $d$ hyperedges; note that the identity $nd=mp$ should hold. The existence of such graphs is a classical result~\cite{janson2011random}. We also assume that the couplings
 $J_e$ are generated i.i.d.\ uniformly from $\left\{\pm 1\right\}$ (namely, they are Rademacher distributed), 
 independently from everything else. We now lay out some facts about this model and either
 provide references for them or lay out road maps for proving them using methods which are now fairly standard.
 
 \begin{enumerate}
 \item[(a)] The ground state energy $\min_\sigma H(\sigma)$ satisfies 
 $\min_\sigma H(\sigma)=-\eta_{\rm Parisi}\sqrt{d}(1+o_d(1))n$ w.h.p.\ as $n\to\infty$.
 Here, $o_d(1)$ hides a factor converging to zero as $d$ increases. This fact was proven in~\cite{dembo2017extremal} for $p=2$ both for random regular
 graphs (discussed above) and also \ER graphs with average degree $d$. The  case of general $p$ was proven  
 in~\cite{chen2019suboptimality} for \ER hypergraphs, 
 and is anticipated to hold for regular graphs as well using the same ideas, though this is not documented in the literature.
 The interpolation is done 
 from the mean field model, namely, the model where all ${n\choose p}$ couplings are present (a complete hypergraph), also
 with Rademacher-distributed couplings. This interpolation was proven recently also for quantum spin glass models and variants such as the so-called
 SYK model~\cite{anschuetz2023product}.
 The validity of this interpolation can be proven using the Lindeberg's method~\cite{sen2018optimization} which also implies 
 the universality property:  the Rademacher distribution can be replaced with any zero mean, variance 1 sufficiently well-behaved distribution.

\item[(b)] The mean field models exhibit the overlap gap property when $p$ is even and satisfies $p\ge 4$. 
This was proven in~\cite{chen2019suboptimality}, and a simpler
proof is found in~\cite{gamarnik2023shattering} 
for the case of large (even or odd) $p$. Furthermore, the overlap gap property ``survives'' the  interpolation process,
and, as a result, holds in \ER and random regular hypergraphs as well for $d$ sufficiently large when $p\ge 4$ is even. 
For the \ER case this was proven in~\cite{chen2019suboptimality} and
is expected to be true for random regular hypergraphs as well using the same technique, though is not documented in the literature.

\end{enumerate}

Combining these facts we obtain a model---the $p$-spin Ising model on random regular hypergraphs---which exhibits an overlap gap property.
Next, we can construct a quantum Hamiltonian (with zero ground state energy up to a shift by the identity) from this classical Hamiltonian using the same method as for the random K-SAT Hamiltonian.
Since $d$ does not depend on the number of spins $n$, our new quantum Hamiltonian is not only $p$-local, but also has bounded (by $dp$)
degree as per our construction. This procedure thus produces a local bounded degree quantum Hamiltonian which exhibits a combinatorial NLTS property.

\section*{Acknowledgements}

The authors thank Aram Harrow and Brice Huang for valuable discussions. E.R.A.\ acknowledges funding by STAQ under award NSF Phy-1818914. D.G.\ acknowledges funding from NSF Grant DMS-2015517.

\bibliographystyle{alphaurl}
\bibliography{main}

\end{document}